\theoremstyle{plain}
\newtheorem{theorem}{Theorem}
\newtheorem{proposition}[theorem]{Proposition}
\newtheorem{lemma}[theorem]{Lemma}
\newtheorem{corollary}[theorem]{Corollary}
\newtheorem{conjecture}[theorem]{Conjecture}
\newtheorem{remark}[theorem]{Remark}
\newtheorem{example}[theorem]{Example}
\theoremstyle{nonumberplain}
\newtheorem{proof}{Proof.}
\def\de{\mathord{\rm d}}
\def\e{\mathord{\rm e}}
\DeclareMathOperator{\pr}{pr}
\DeclareMathOperator{\cov}{cov}
\DeclareMathOperator{\var}{var}
\DeclareMathOperator{\logit}{logit}
\begin{document}
\begin{center}
{\bfseries\LARGE Component Importance\\ Based on Dependence Measures\par}
\medskip\par
Mario Hellmich\footnote{mario.hellmich@bfe.bund.de}
\smallskip\par
{\small Bundesamt f\"ur kerntechnische Entsorgungssicherheit\\ (Federal Office for the Safety of Nuclear Waste Management) \\ Willy-Brandt-Stra\ss e~5, 38226 Salzgitter, Germany}\par
{Dated\kern1pt: \today} 
\end{center}

\begin{abstract}
\noindent We discuss the construction of component importance measures for binary coherent reliability systems from known stochastic dependence measures by measuring the dependence between system and component failures. We treat both the time-dependent case in which the system and its components are described by binary random variables at a fixed instant as well as the continuous time case where the system and component life times are random variables. As dependence measures we discuss covariance and mutual information, the latter being based on Shannon entropy. We prove some basic properties of the resulting importance measures and obtain results on importance ordering of components.\par
\medskip
\noindent\emph{Keywords\kern1pt:} Reliability theory, component importance measure, binary coherent system, stochastic dependence, entropy
\end{abstract}

\section{Introduction}
\label{sec::Introduction}
Component importance measures are used in reliability theory and engineering to rank the individual components of a system according to their influence on aspects of the system performance (mostly reliability), hence they serve to identify weaknesses and improvement potential. A large number of importance measures have been proposed, see the reviews~\cite{BEN95,BS01,KZ12,KZ12a} as well as the monograph~\cite{KZ12b}. In view of this multiplicity it is not surprising to find a variety approaches to construct  importance measures. A recurring idea is, however, to obtain them in some way from the connection between system failure and failure of an individual component, such as in the Birnbaum measure~\cite{B69} in the time-dependent case, or the Barlow--Proschan measure~\cite{BP75a} in the time-independent case. The present paper sets out from the question whether it is possible to obtain good importance measures by using known dependence measures and applying them to random variables associated with component and system failure. Thereby, from a more fundamental point of view, we hope to shed some light on the connection between component importance and dependence.\par
Generally, one can distinguish between time-dependent importance measures, which only depend on the system structure and the component reliabilities, which are considered either at a fixed instant or are assumed static (e.\,g.\ to calculate probabilities of failure on demand), and time-independent measures which take into account the component life distributions~\cite{BEN95}. In the former case, the system consisting of~$n$ components is described by binary random variables~$X_1,\ldots,X_n$ and~$X$, indicating the functioning or failure of the components and the system, respectively, either at a fixed point of time or independent of time. The latter case describes the system in continuous time by the nonnegative random variables~$T_1,\ldots,T_n$ and~$T$, which are the component and system failure instants, respectively. This is important since, in a continuous time setting, time-dependent measures leave it to the analyst at which time points to evaluate and compare them. In the present paper we shall address both situations. The idea is to use a dependence measure~$\mathscr D(U,V)\in\mathbb R$ defined for a pair of random variables~$U,V$ of a sufficiently rich class. We will investigate whether~$\mathscr D(X_i,X)$ in the binary case and~$\mathscr D(T_i,T)$ in the continuous time case, for $i=1,\ldots,n$, define sensible importance measures for particular choices of~$\mathscr D$.\par
It is not self-evident which properties one should require of a sensible dependence measure~$\mathscr D$ in this context. A set of axioms for dependence measures was proposed by R\'enyi~\cite{R59}, but other approaches and generalizations have been discussed as well, see e.\,g.~\cite{SW81,S84,KS89,SS07}. For the present purpose we find that a dependence measure which does not satisfy all of R\'enyi's axioms can yield a good importance measure. On the other hand, from R\'enyi's axioms alone certain desirable properties of importance measures cannot be proved. For example, R\'enyi's approach is non-ordinal, i.\,e.\ the axioms do not facilitate the comparison of different degrees of dependence, hence one cannot establish properties that compare relative importances, such as that the weakest component in a series system should have the largest importance. Specifically, we shall consider covariance $\mathscr D(U,V)=\cov(U,V)$ in the present paper. As is well known, covariance only detects linear dependence and is not nonnegative and normalized, hence not all of R\'enyi's axioms are satisfied. Still we can show that covariance can lead to a reasonable importance measure. If the system under consideration is coherent the variables~$X_i$ and~$X$ in the binary case and~$T_i$ and~$T$ in the continuous time case are associated in the sense of Esary, Proschan and Walkup~\cite{EPW67,BP75}, and in this case $\cov(X_i,X)\geq0$ or $\cov(T_i,T)\geq0$, and moreover uncorrelatedness implies independence. Thus, in fact, on the set of pairs of associated random variables, covariance is a proper dependence measure, and we can use it in this case. Nevertheless, for these reasons we do not define the term ``dependence measure'' in a formal way but use it in a rather loose fashion.\par
In the following we shall consider covariance and mutual information in place of the dependence measure~$\mathscr D$. We find that in the binary case covariance leads to a reasonable importance measure. In the continuous time case there are several ways to employ covariance, we shall discuss the possibilities $\cov(T_i,T)$ and $\displaystyle\sup_{s,t\geq0}\bigl(\cov(\mathbbm1_{\{T_i>s\}},\mathbbm1_{\{T>t\}})\bigr)$ (here~$\mathbbm1_A$ denotes the indicator function of the set~$A$), the former being related to the $\mathrm L^1$-distance between the joint distribution functions of~$T_i$ and~$T$ and the product of its marginals, and the latter to the $\mathrm L^\infty$-distance. We find that only the latter construction leads to a reasonable importance measure. Moreover, in the binary case we shall also discuss mutual information in place of~$\mathscr D$, which is based on Shannon entropy, and in this way we complement and extend some results of~\cite{EJSS14}, and we correct an error. We do not provide a corresponding discussion of mutual information in the continuous time case due to the added technical complications arising from the fact that the joint distribution of~$T_i$ and~$T$ is not absolutely continuous, hence mutual information is not defined. See however the remarks in~\cite{EJSS14} as well as~\cite{EJS14}. In order to assess whether an importance measure is reasonable, we investigate if it is able to detect irrelevant components or complete dependence, and whether it assigns large importance to unreliable series components and reliable parallel components.\par
The paper is organized as follows. In Section~\ref{sec::BinarySystems} we discuss the application of covariance and mutual information in the case of binary systems, whereas Section~\ref{sec::ContinuousTime} treats the continuous time case. Section~\ref{sec::ConcludingRemarks} closes with some concluding remarks.

\section{Binary Systems}
\label{sec::BinarySystems}

\subsection{Preliminaries}
\label{sec::PreliminariesBinarySystem}
We introduce some terms and notation used throughout the paper, see~\cite{BP75} for background information on the theory of binary coherent systems. Consider a system of~$n$ components. The state of the components is given by the~$n$ binary random variables $X_1,\ldots,X_n\in\{0,1\}$, defined on some common probability space~$(\Omega,\mathscr F,\mathbb P)$\kern1pt; we assume the random variables to be independent throughout. We write $p_i=\mathbb P\{X_i=1\}$ for the component reliabilities, and $\bar p_i=1-p_i=\mathbb P\{X_i=0\}$, as well as $\boldsymbol p=(p_1,\ldots,p_n)$ and $\boldsymbol 1=(1,\ldots,1)$. We shall frequently use the familiar notation~$(\boldsymbol x,0_i)$ and~$(\boldsymbol x,1_i)$, where
\[(\boldsymbol x,y_i)=(x_1,\ldots,x_{i-1},y,x_{i+1},\ldots,x_n),\quad i=1,\ldots,n,\ y\in\{0,1\},\]
and with $\boldsymbol x=(x_1,\ldots,x_n)$. The \emph{structure function} $\phi:\{0,1\}^{\times n}\longrightarrow\{0,1\}$ of the system determines its functioning or failure as a function of the states of the components. We call a structure function \emph{monotone} if it is nondecreasing (i.\,e.\ if $\boldsymbol x=(x_1,\ldots,x_n)$ and $\boldsymbol y=(y_1,\ldots,y_n)$ are two binary vectors with $x_i\leq y_i$ for $i=1,\ldots,n$, which is written as $\boldsymbol x\leq\boldsymbol y$, then $\phi(\boldsymbol x)\leq\phi(\boldsymbol y)$). Component~$i$ is called \emph{irrelevant} if $\phi(\boldsymbol x,1_i)=\phi(\boldsymbol x,0_i)$ for all~$\boldsymbol x$. If~$\phi$ is both monotone and has no irrelevant components, it is called \emph{coherent}. Throughout we will write $X=\phi(X_1,\ldots,X_n)$ as an abbreviation. The reliability of the system is given by
\[h_\phi(\boldsymbol p)=\mathbb P\{\phi(X_1,\ldots,X_n)=1\}=\mathbb E(\phi(X_1,\ldots,X_n))=\mathbb E(X),\]
where~$\mathbb E$ is the expectation corresponding to~$\mathbb P$. By independence~$h_\phi$ is a function of~$\boldsymbol p$ alone, and in fact a polynomial of degree less or equal than one in each~$p_1,\ldots,p_n$, as can be seen by pivotal decomposition on component~$i$\kern1pt: 
\begin{equation}
\phi(\boldsymbol x)=x_i\phi(\boldsymbol x,1_i)+(1-x_i)\phi(\boldsymbol x,0_i). 
\end{equation}
We call~$h_\phi$ the \emph{reliability function} of~$\phi$. The following obvious fact will be needed repeatedly\kern1pt:
\begin{equation}
\mathbb P\{X=x\mid X_i=y\}=h_\phi(\boldsymbol p,y_i)^x(1-h_\phi(\boldsymbol p,y_i))^{1-x},
\label{eq::ConditionalReliabilityFunction}
\end{equation}
where $x,y\in\{0,1\}$ and $i=1,\ldots,n$.\par
In this framework the most popular importance measure is the \emph{Birnbaum measure}~\cite{B69}. For a coherent structure function~$\phi$ it is defined as
\begin{equation}
\begin{aligned}
I_\phi^{\rm B}(i)&=\mathbb P\{\phi(\boldsymbol X,1_i)-\phi(\boldsymbol X,0_i)=1\}=h_\phi(\boldsymbol p,1_i)-h_\phi(\boldsymbol p,0_i)\\
&=\frac{\partial h_\phi}{\partial p_i}(\boldsymbol p),
\end{aligned}
\label{eq::BirnbaumMeasure}
\end{equation}
with the random vector $\boldsymbol X=(X_1,\ldots,X_n)$. Notice that~\eqref{eq::BirnbaumMeasure} does not depend on~$p_i$\kern1pt; however, the ordering of the components induced by~$I_\phi^{\rm B}$ does, in general, depend on all~$p_1,\ldots,p_n$.

\subsection{Covariance}
\label{sec::ImportanceCovarianceBinary}
The underlying idea of using dependence measures to construct importance indices is that a component should be important if its failure is strongly correlated with system failure. Moreover, for a repairable system, we could add that a component should be important when its repair is strongly correlated with system repair~\cite{NG09}. Thus, as explained in Section~\ref{sec::Introduction}, if~$\mathscr D(\cdot,\cdot)$ denotes some dependence measure defined at least on the set of binary variables, we can define a notion of importance of component~$i$ as $I_\phi^{\mathscr D}(i)=\mathscr D(X_i,X)$. For a sensible importance measure we should require $I_\phi^{\mathscr D}\geq0$. Generally, what matters is not the absolute values of the component importances, but in the first place the ordering of the components induced by a particular measure and secondly the relative differences between importances. To facilitate comparison between components it is sometimes advantageous to introduce a normalized version of~$I_\phi^{\mathscr D}$ by
\begin{equation}
\widehat I_\phi^{\mathscr D}(i)=I_\phi^{\mathscr D}(i)\Big/\sum_{j=1}^nI_\phi^{\mathscr D}(j),\quad i=1,\ldots,n,
\end{equation}
with the property that $\widehat I_\phi^{\mathscr D}(1)+\cdots+\widehat I_\phi^{\mathscr D}(n)=1$. Notice that the orderings of the components induced by~$I_\phi^{\mathscr D}$ and~$\widehat I_\phi^{\mathscr D}$ are the same.\par
Consider two random variables~$U$ and~$V$ and recall that their covariance is defined as $\cov(U,V)=\mathbb E(UV)-\mathbb E(U)\mathbb E(V)$. Following the above idea we define the \emph{covariance importance} of component~$i$ by
\begin{equation}
I_\phi^{\rm c}(i)=\cov(X_i,X)=\cov(X_i,\phi(X_1,\ldots,X_n)),\quad i=1,\ldots,n,
\label{eq::CovarianceImp}
\end{equation}
for a not necessarily coherent structure function~$\phi$. Recall our remarks in Section~\ref{sec::Introduction} that covariance, in general, is not a proper dependence measure. However, in this particular case it will be used on the set of pairs variables which are associated in the sense of Esary, Proschan and Walkup~\cite{EPW67,BP75}, and we will show that~\eqref{eq::CovarianceImp} indeed defines a valid importance measure, cf.~\thref{prop::CovarianceImportance} below.
\par
We start by establishing the following representation of~$I_\phi^{\rm c}$ in terms of the reliability function, using~\eqref{eq::ConditionalReliabilityFunction} and~\eqref{eq::CovarianceImp}\kern1pt:
\begin{equation}
\begin{aligned}
I_\phi^{\rm c}(i)&=\mathbb P\{X=1\mid X_i=1\}\mathbb P\{X_i=1\}-\mathbb P\{X=1\}\mathbb P\{X_i=1\}\\
&=p_i(h_\phi(\boldsymbol p,1_i)-h_\phi(\boldsymbol p)).
\end{aligned}
\label{eq::CovarianceImportanceReliabFunct}
\end{equation}
Notice that, unlike the Birnbaum importance~\eqref{eq::BirnbaumMeasure}, the covariance importance depends on~$p_i$. By pivotal decomposition we find
\begin{equation}
\begin{aligned}
I_\phi^{\rm c}(i)&=p_i(h_\phi(\boldsymbol p,1_i)-h_\phi(\boldsymbol p))=\bar p_i(h_\phi(\boldsymbol p)-h_\phi(\boldsymbol p,0_i))\\
&=p_i\bar p_i(h_\phi(\boldsymbol p,1_i)-h_\phi(\boldsymbol p,0_i)).
\end{aligned}
\label{eq::CovarianceImportanceRelationReliabfunct}
\end{equation}
The quantities appearing in the parentheses in the first line of the last equation are well known importance measures, $I_\phi^{\rm ra}(i)=h_\phi(\boldsymbol p)-h_\phi(\boldsymbol p,0_i)$ is called the \emph{risk achievement,} and $I_\phi^{\rm rr}(i)=h_\phi(\boldsymbol p,1_i)-h_\phi(\boldsymbol p)$ is called the \emph{risk reduction} or \emph{improvement potential}~\cite{BS01}. Thus we have the relations
\begin{equation}
I_\phi^{\rm c}(i)=\bar p_iI_\phi^{\rm ra}(i)=p_iI_\phi^{\rm rr}(i)=p_i\bar p_iI_\phi^{\rm B}(i).
\label{eq::CovarianceBirnbaumRelation}
\end{equation}
\par
By definition~$I_\phi^{\rm c}(i)$ measures the amount of linear correlation between the events of system failure and failure of component~$i$. Thus it concerns a different aspect than the Birnbaum measure, which identifies the component that has the highest probability of being critical for system operability. According to~\eqref{eq::BirnbaumMeasure} this is equivalent to identifying the component for which varying its reliability~$p_i$ yields the largest impact on system reliability. Thus for identifying the components whose failure has the strongest association with system failure, covariance importance should be used. In a reliability context this may be applied to test the balancedness of a safety concept. In contrast, the Birnbaum measure is preferred for identifying the components to which expenditures should be directed when the overall system reliability is to be improved. Equation~\eqref{eq::CovarianceBirnbaumRelation} provides another perspective on covariance importance. For example, if two components have the same risk achievement in the system, then the more unreliable one will have higher covariance importance (and correspondingly for risk reduction). See~\cite{CPS98,BS01} for of the interpretation of some of the classical time-dependent importance measures in nuclear applications.\par
We are now ready to derive some basic properties of the covariance importance which corroborate that it defines a sensible importance measure.
\begin{proposition}
\label{prop::CovarianceImportance}
For the covariance importance the following three assertions hold\kern1pt:
\begin{enumerate}
\item If~$\phi$ is coherent, $0\leq I_\phi^{\rm c}(i)\leq\tfrac14$ for all $i=1,\ldots,n$. Moreover, if $I_\phi^{\rm c}(i)=\tfrac14$ then $X=X_i$ a.\,s.,\ and if $I_\phi^{\rm c}(i)=0$ then~$X_i$ and~$X$ are independent. If $I_\phi^{\rm c}(i)=0$ for all $\boldsymbol p=(p_1,\ldots,p_n)$ then component~$i$ is irrelevant.
\item As a function of~$\boldsymbol p$ the importance measure~$I_\phi^{\rm c}(i)$ is a polynomial~$Q_i$, i.\,e.\ $I_\phi^{\rm c}(i)=Q_i(\boldsymbol p)$ for $i=1,\ldots,n$.
\item If~$\phi'(\boldsymbol x)=1-\phi(\boldsymbol1-\boldsymbol x)$ denotes the dual structure function, we have $I_{\phi'}^{\rm c}(i)=Q_i(\boldsymbol 1-\boldsymbol p)$, with $Q_i(\boldsymbol p)=I_\phi^{\rm c}(i)$ as in~(ii).
\end{enumerate}
\end{proposition}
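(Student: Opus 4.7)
The strategy is to lean heavily on the representation~\eqref{eq::CovarianceImportanceRelationReliabfunct} and the relation~\eqref{eq::CovarianceBirnbaumRelation} to~$I_\phi^{\rm B}$, reducing each claim to a short computation with the reliability function.

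For part (i), I would exploit the factorization $I_\phi^{\rm c}(i)=p_i\bar p_iI_\phi^{\rm B}(i)$. Since~$\phi$ is coherent it is in particular monotone, so $h_\phi(\boldsymbol p,1_i)-h_\phi(\boldsymbol p,0_i)\geq0$ (this is immediate from pivotal decomposition and monotonicity), hence $I_\phi^{\rm B}(i)\in[0,1]$, which together with $p_i\bar p_i\leq\tfrac14$ yields $0\leq I_\phi^{\rm c}(i)\leq\tfrac14$. Equality $I_\phi^{\rm c}(i)=\tfrac14$ forces $p_i=\tfrac12$ together with $h_\phi(\boldsymbol p,1_i)=1$ and $h_\phi(\boldsymbol p,0_i)=0$; inserted into~\eqref{eq::ConditionalReliabilityFunction}, this gives $\mathbb P\{X=1\mid X_i=1\}=1$ and $\mathbb P\{X=0\mid X_i=0\}=1$, i.\,e.\ $X=X_i$ almost surely. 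For the case $I_\phi^{\rm c}(i)=0$ I would note that, because $X_i,X\in\{0,1\}$, vanishing covariance is equivalent to $\mathbb P\{X_i=x,X=y\}=\mathbb P\{X_i=x\}\mathbb P\{X=y\}$ for all four combinations $x,y\in\{0,1\}$ (the single identity $\mathbb E(X_iX)=\mathbb E(X_i)\mathbb E(X)$ pins down the joint $2\times2$-table given its margins), hence independence. Finally, if $I_\phi^{\rm c}(i)=0$ for \emph{every} $\boldsymbol p$, then taking $0<p_j<1$ for all~$j$ forces $I_\phi^{\rm B}(i)=h_\phi(\boldsymbol p,1_i)-h_\phi(\boldsymbol p,0_i)\equiv0$; since $\phi(\boldsymbol x,1_i)-\phi(\boldsymbol x,0_i)\in\{0,1\}$ by monotonicity and its expectation (a multilinear polynomial in the remaining $p_j$) vanishes identically, every coefficient vanishes, so $\phi(\boldsymbol x,1_i)=\phi(\boldsymbol x,0_i)$ for all~$\boldsymbol x$, which is irrelevance of component~$i$.

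Part (ii) is immediate: $h_\phi$ is a multilinear polynomial in~$\boldsymbol p$, so by~\eqref{eq::CovarianceImportanceReliabFunct} the expression $p_i\bigl(h_\phi(\boldsymbol p,1_i)-h_\phi(\boldsymbol p)\bigr)$ defines a polynomial~$Q_i$ in~$\boldsymbol p$.

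For part (iii) I would use the standard duality identity $h_{\phi'}(\boldsymbol p)=1-h_\phi(\boldsymbol 1-\boldsymbol p)$, which follows at once from $\phi'(\boldsymbol x)=1-\phi(\boldsymbol 1-\boldsymbol x)$ and the independence of the~$X_j$. Observing that substituting $1$ in position~$i$ before complementing is the same as substituting~$0$ in position~$i$ after complementing, one gets $h_{\phi'}(\boldsymbol p,1_i)=1-h_\phi(\boldsymbol 1-\boldsymbol p,0_i)$. Inserting both into~\eqref{eq::CovarianceImportanceReliabFunct} yields
\[
I_{\phi'}^{\rm c}(i)=p_i\bigl(h_\phi(\boldsymbol 1-\boldsymbol p)-h_\phi(\boldsymbol 1-\boldsymbol p,0_i)\bigr),
\]
and comparing this with the alternative representation $Q_i(\boldsymbol p)=\bar p_i\bigl(h_\phi(\boldsymbol p)-h_\phi(\boldsymbol p,0_i)\bigr)$ from~\eqref{eq::CovarianceImportanceRelationReliabfunct} (evaluated at $\boldsymbol 1-\boldsymbol p$, which turns $\bar p_i$ into $p_i$) gives the claimed equality $I_{\phi'}^{\rm c}(i)=Q_i(\boldsymbol 1-\boldsymbol p)$.

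The only slightly subtle step is the last part of~(i), namely passing from ``covariance vanishes for all~$\boldsymbol p$'' to irrelevance; the issue is to justify that a polynomial identity on an open subset of~$[0,1]^n$ forces the combinatorial identity $\phi(\boldsymbol x,1_i)=\phi(\boldsymbol x,0_i)$ pointwise, which I would handle by the multilinearity of $\mathbb E(\phi(\boldsymbol X,1_i)-\phi(\boldsymbol X,0_i))$ and the nonnegativity of each of its monomial coefficients under coherence. The remaining parts are essentially bookkeeping with the reliability function.
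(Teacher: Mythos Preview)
Your argument is correct. The main departures from the paper are in parts~(i) and~(iii). For the upper bound $I_\phi^{\rm c}(i)\leq\tfrac14$ and its equality case, the paper invokes Cauchy--Schwarz, $\cov(X_i,X)\leq\sqrt{\var(X_i)\var(X)}\leq\tfrac14$, and reads off $X=X_i$ a.s.\ from equality there; your route via the factorization $p_i\bar p_iI_\phi^{\rm B}(i)$ with $I_\phi^{\rm B}(i)\in[0,1]$ is more elementary and makes the equality conditions ($p_i=\tfrac12$, $h_\phi(\boldsymbol p,1_i)=1$, $h_\phi(\boldsymbol p,0_i)=0$) explicit without appealing to the equality case of Cauchy--Schwarz. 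For~(iii) the paper does a one-line covariance calculation, $\cov(X_i,1-\phi(\boldsymbol1-\boldsymbol X))=\cov(1-X_i,\phi(\boldsymbol1-\boldsymbol X))=Q_i(\boldsymbol1-\boldsymbol p)$, while you trace the duality through the reliability-function identity $h_{\phi'}(\boldsymbol p)=1-h_\phi(\boldsymbol1-\boldsymbol p)$; the paper's version is slicker, yours perhaps more transparent about where the substitution $\boldsymbol p\mapsto\boldsymbol1-\boldsymbol p$ enters. One small wording issue in your final paragraph: in the standard monomial basis the coefficients of $I_\phi^{\rm B}(i)$ as a polynomial in the remaining $p_j$ need not be nonnegative; what makes your argument work is the expansion in the Bernoulli basis $\prod_{j\neq i}p_j^{x_j}\bar p_j^{1-x_j}$, whose coefficients are $\phi(\boldsymbol x,1_i)-\phi(\boldsymbol x,0_i)\in\{0,1\}$ by monotonicity---equivalently, $I_\phi^{\rm B}(i)=\mathbb P\{\phi(\boldsymbol X,1_i)\neq\phi(\boldsymbol X,0_i)\}=0$ at an interior $\boldsymbol p$ forces the event to be empty.
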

\begin{proof}
(i)\enspace For a coherent~$\phi$, the reliability function~$h_\phi$ is monotonic, hence the first inequality follows from~\eqref{eq::CovarianceImportanceReliabFunct}. By the Cauchy--Schwarz inequality we have
\begin{equation}
I_\phi^{\rm c}(i)=\cov(X_i,X)\leq\sqrt{\var(X_i)\var(X)}=\sqrt{h_\phi(\boldsymbol p)\bar h_\phi(\boldsymbol p)\cdot p_i\bar p_i}\leq\tfrac14.
\label{eq::CovarianceImportanceBound}
\end{equation}
Now suppose that $I_\phi^{\rm c}(i)=\tfrac14$. Then we have equality in~\eqref{eq::CovarianceImportanceBound}, which means that the square root attains its maximum, which is at $h_\phi(\boldsymbol p)=p_i=\tfrac12$. Thus it follows that $\var(X)=\var(X_i)=\tfrac14$. In this case it is well known that 
\[\frac{X-\mathbb E(X)}{\sqrt{\var(X)}}=\frac{X_i-\mathbb E(X_i)}{\sqrt{\var(X_i)}}\qquad\text{a.\,s.},\]
thus $X=X_i$ a.\,s. The next assertion is obvious from~\eqref{eq::CovarianceImportanceReliabFunct} and the properties of the reliability function. Finally, if $I_\phi^{\rm c}(i)=0$ for all~$\boldsymbol p$, then from~\eqref{eq::CovarianceImportanceRelationReliabfunct} we have $h_\phi(\boldsymbol p,1_i)=h_\phi(\boldsymbol p,0_i)$ for all~$\boldsymbol p$, which implies irrelevance of component~$i$. Statement~(ii)\ is obvious from~\eqref{eq::CovarianceImportanceReliabFunct}, and concerning~(iii)\ we notice that
\[I_{\phi'}^{\rm c}(i)=\cov(X_i,1-\phi(\boldsymbol1-\boldsymbol X))=\cov(1-X_i,\phi(\boldsymbol 1-\boldsymbol X))=Q_i(\boldsymbol1-\boldsymbol p),\]
as was to be shown.
\end{proof}
\begin{remark}
For a cohernet~$\phi$, since $I_\phi^{\rm c}(i)=\cov(X_i,X)\geq0$ and~$X_i$ and~$X$ are binary, they are associated. Association of~$X$ and~$X_i$ can also be proved directly\kern1pt: Since~$X_1,\ldots,X_n$ are independent, they are associated, so the claim follows since~$\phi$ and the projection $\pr_i(x_1,\ldots,x_n)=x_i$ are nondecreasing. Then the last statement in claim~(i) of \thref{prop::CovarianceImportance}, i.\,e.\  independence of~$X_i$ and~$X$ if~$I_\phi^{\rm c}(i)=0$, follows from $\cov(X_i,X)=0$ since the variables are associated.
\end{remark}
\par
Notice that we have an inequality between~$I_\phi^{\rm B}$ and~$I_\phi^{\rm c}$, i.\,e.\ $I_\phi^{\rm B}(i)\geq I_\phi^{\rm c}(i)$ for all $i=1,\ldots,n$ if~$\phi$ is coherent\kern1pt; this follows immediately from~\eqref{eq::CovarianceBirnbaumRelation}. We now calculate covariance importance in some explicit examples.
\begin{example}
Consider a series system (i.\,e.\ $\phi(\boldsymbol x)=x_1\cdots x_n$), then we obtain the covariance importance from~\eqref{eq::CovarianceImportanceRelationReliabfunct} as
\[I_\phi^{\rm c}(i)=\bar p_ih_\phi(\boldsymbol p)=\bar p_ip_ih_\phi(\boldsymbol p,1_i),\quad\widehat I_\phi^{\rm c}(i)=\frac{\bar p_i}{\bar p_1+\cdots+\bar p_n}.\]
 For a parallel system we obtain from \thref{prop::CovarianceImportance} the corresponding covariance importances by replacing~$\bar p_i$ by~$p_i$ for $i=1,\ldots,n$ in the above equation.
\end{example}
\begin{example}
Consider a $k$-out-of-$n:{\rm G}$ system with independent components, which is functioning if of its~$n$ components at least~$k$ are working. Its structure function is given by 
\begin{equation}
\phi(\boldsymbol x)=\sum_{\substack{\boldsymbol\epsilon\in\{0,1\}^{\times n}\\ \epsilon_1+\cdots+\epsilon_n\geq k}}x_1^{\epsilon_1}\cdots x_n^{\epsilon_n}(1-x_1)^{1-\epsilon_1}\cdots(1-x_n)^{1-\epsilon_n},
\label{eq::koutofnStructureFunct}
\end{equation}
and reliability function reads
\begin{equation}
h_{n,k}(\boldsymbol p)=\sum_{\substack{\boldsymbol\epsilon\in\{0,1\}^{\times n}\\ \epsilon_1+\cdots+\epsilon_n\geq k}}p_1^{\epsilon_1}\cdots p_n^{\epsilon_n}\bar p_1^{1-\epsilon_1}\cdots\bar p_n^{1-\epsilon_n},
\label{eq::koutofnReliabilityFunct}
\end{equation}
where we denoted the reliability function by~$h_{n,k}(\boldsymbol p)$ to display the dependence on~$k$ and~$n$ explicitly. We obtain from~\eqref{eq::CovarianceImportanceReliabFunct} and~\eqref{eq::CovarianceImportanceRelationReliabfunct} for the covariance importance of the $k$-out-of-$n:{\rm G}$ system
\begin{align*}
I_\phi^{\rm c}(i)&=p_i(h_{n-1,k-1}(p_1,\ldots,p_{i-1},p_{i+1},\ldots,p_n)-h_{n,k}(\boldsymbol p))\\
&=\bar p_i(h_{n,k}(\boldsymbol p)-h_{n-1,k}(p_1,\ldots,p_{i-1},p_{i+1},\ldots,p_n)).
\end{align*}
\end{example}
From~\eqref{eq::koutofnStructureFunct} we can derive another representation of the covariance importance of the  $k$-out-of-$n:{\rm G}$ system\kern1pt:
\begin{multline*}
I_\phi^{\rm c}(i)=\cov(X_i,\phi(\boldsymbol X))\\
\begin{aligned}
&=\sum_{\substack{\boldsymbol\epsilon\in\{0,1\}^{\times n}\\ \epsilon_1+\cdots+\epsilon_n\geq k}}\cov(X_i,X_i^{\epsilon_i}(1-X_i)^{1-\epsilon_i})\cdot p_1^{\epsilon_1}\cdots\check p_i^{\epsilon_i}\cdots p_n^{\epsilon_n}\bar p_1^{1-\epsilon_1}\cdots\check{\bar p}_i^{1-\epsilon_i}\cdots\bar p_n^{1-\epsilon_n}\\
&=\sum_{\substack{\boldsymbol\epsilon\in\{0,1\}^{\times n}\\ \epsilon_1+\cdots+\epsilon_n\geq k}}(-1)^{1-\epsilon_i}(p_1^{\epsilon_1}\cdots p_{i-1}^{\epsilon_{i-1}}p_ip_{i+1}^{\epsilon_{i+1}}\cdots p_n^{\epsilon_n})\cdot{}\\
&\mkern300mu{}\cdot(\bar p_1^{1-\epsilon_1}\cdots\bar p_{i-1}^{1-\epsilon_{i-1}}\bar p_i\bar p_{i+1}^{1-\epsilon_{i+1}}\cdots\bar p_n^{1-\epsilon_n}),
\end{aligned}
\end{multline*}
where $p_1\cdots\check p_i\cdots p_n$ signifies that~$p_i$ is to be dropped from the product.\par
\begin{example}
In order to demonstrate that
\begin{table}[b]
\caption{Component ordering according to the Birnbaum and covariance importance in the 2-out-of-3 system for various component reliabilities. The last two columns contain the component indices in descending order of importance.}
\begin{center}
\begin{tabular}{lllll}
\multicolumn{3}{l}{\textit{Component reliabilities}} & \multicolumn{2}{l}{\textit{Induced ordering}}\\
$p_1$ & $p_2$ & $p_3$ & $I_\phi^{\rm B}$ & $I_\phi^{\rm c}$\\ \hline
0.1   & 0.2   & 0.3   & $3\enskip2\enskip1$ & $1\enskip2\enskip3$\\
0.3   & 0.4   & 0.5   & $3\enskip2\enskip1$ & $1\enskip3\enskip2$\\
0.5   & 0.6   & 0.7   & $1\enskip2\enskip3$ & $3\enskip1\enskip2$\\
0.7   & 0.8   & 0.9   & $1\enskip2\enskip3$ & $3\enskip2\enskip1$
\end{tabular}
\end{center}
\label{tab::ExampleOrderingCovariance}
\end{table}
the covariance importance induces a different component ordering than the Birnbaum importance we consider a 2-out-of-3 system, and we suppose without loss of generality that $p_1\leq p_2\leq p_3$. From~\eqref{eq::koutofnReliabilityFunct} we obtain for the reliability function $h_{3,2}(\boldsymbol p)=p_1p_2+p_1p_3+p_2p_3-2p_1p_2p_3$, and thus from~\eqref{eq::CovarianceBirnbaumRelation}
\begin{align*}
I_\phi^{\rm c}(1)&=p_1\bar p_1\cdot I_\phi^{\rm B}(1)=p_1\bar p_1[p_2+p_3-2p_2p_3],\\
I_\phi^{\rm c}(2)&=p_2\bar p_2\cdot I_\phi^{\rm B}(2)=p_2\bar p_2[p_1+p_3-2p_1p_3],\\
I_\phi^{\rm c}(3)&=p_3\bar p_3\cdot I_\phi^{\rm B}(3)=p_3\bar p_3[p_1+p_2-2p_1p_2].
\end{align*}
For the component ordering induced by the Birnbaum measure we have the following result\kern1pt: If $p_1\leq p_2\leq p_3\leq\tfrac12$ then $I_\phi^{\rm B}(3)\geq I_\phi^{\rm B}(2)\geq I_\phi^{\rm B}(1)$, and if $\tfrac12\leq p_1\leq p_2\leq p_3$ then $I_\phi^{\rm B}(3)\leq I_\phi^{\rm B}(2)\leq I_\phi^{\rm B}(1)$\kern1pt; this follows by explicit computation or from a general result~\cite{BP83,BP75} on Schur convexity of~$h_{n,k}$. This is to be compared by the ordering induced by~$I_\phi^{\rm c}$, which is recorded for some component reliabilities in Table~\ref{tab::ExampleOrderingCovariance}. It is seen that in this example the covariance importance tends to order the components in the opposite order than the Birnbaum importance.
\end{example}
\par
The next result demonstrates that covariance importance is a sensible importance measure. It shows that~$I_\phi^{\rm c}$ assigns large importance to unreliable series components and reliable parallel components. Moreover, it shows a compatibility property with a modular structure of the system.
\begin{theorem}
For a coherent structure function~$\phi$ the following two assertions hold true\kern1pt:
\begin{enumerate}
\item Suppose that component~$i$ is in series with a coherent module of the system, and that this module contains a component~$j$ such that $p_i\leq p_j$, where $j\neq i$. Then $I_\phi^{\rm c}(i)\geq I_\phi^{\rm c}(j)$.
\item Suppose that component~$i$ is parallel to a coherent module of the system, and that this module contains a component~$j$ such that $p_i\geq p_j$, where $j\neq i$. Then $I_\phi^{\rm c}(i)\geq I_\phi^{\rm c}(j)$.
\end{enumerate}
\label{thm::CovarianceSeriesParallelOrder}
\end{theorem}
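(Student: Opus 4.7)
For part (i), the plan is to perform the modular decomposition explicitly and reduce the claim to a bound on the covariance importance of $j$ inside the module. Write $\phi(\boldsymbol x)=x_i\psi(\boldsymbol x')$, where $\psi$ is the coherent module in series with component $i$ and $\boldsymbol x'$ collects the indices belonging to the module (so $j$ is among them, and $j\neq i$). By independence,
\[
h_\phi(\boldsymbol p)=p_ih_\psi(\boldsymbol p'),\quad h_\phi(\boldsymbol p,1_i)=h_\psi(\boldsymbol p'),\quad h_\phi(\boldsymbol p,y_j)=p_ih_\psi(\boldsymbol p',y_j),
\]
and substituting these into~\eqref{eq::CovarianceImportanceRelationReliabfunct} yields the two identities
\[
I_\phi^{\rm c}(i)=p_i\bar p_ih_\psi(\boldsymbol p'),\qquad I_\phi^{\rm c}(j)=p_i\,I_\psi^{\rm c}(j),
\]
so the claim $I_\phi^{\rm c}(i)\geq I_\phi^{\rm c}(j)$ reduces (after cancelling $p_i$, with the boundary case $p_i=0$ trivial) to showing $\bar p_ih_\psi(\boldsymbol p')\geq I_\psi^{\rm c}(j)$.

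The key step is the upper bound $I_\psi^{\rm c}(j)\leq\bar p_jh_\psi(\boldsymbol p')$. This is immediate from the representation $I_\psi^{\rm c}(j)=\bar p_j(h_\psi(\boldsymbol p')-h_\psi(\boldsymbol p',0_j))$ in~\eqref{eq::CovarianceImportanceRelationReliabfunct} together with $h_\psi(\boldsymbol p',0_j)\geq0$. The hypothesis $p_i\leq p_j$ is equivalent to $\bar p_j\leq\bar p_i$, so we obtain $I_\psi^{\rm c}(j)\leq\bar p_jh_\psi(\boldsymbol p')\leq\bar p_ih_\psi(\boldsymbol p')$, which finishes (i).

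For part (ii) there are two equally short routes. The direct one mirrors the above using $\phi(\boldsymbol x)=1-(1-x_i)(1-\psi(\boldsymbol x'))$: the resulting dual identities are
\[
I_\phi^{\rm c}(i)=p_i\bar p_i(1-h_\psi(\boldsymbol p'))\qquad\text{and}\qquad I_\phi^{\rm c}(j)=\bar p_i\,I_\psi^{\rm c}(j),
\]
and the matching bound is $I_\psi^{\rm c}(j)=p_j(h_\psi(\boldsymbol p',1_j)-h_\psi(\boldsymbol p'))\leq p_j(1-h_\psi(\boldsymbol p'))$; together with $p_j\leq p_i$ this closes the argument. Alternatively one invokes duality via \thref{prop::CovarianceImportance}(iii): component $i$ is parallel to $\psi$ in $\phi$ iff $i$ is in series with the dual module $\psi'$ in the dual system $\phi'$, and part~(i) applied to $\phi'$ at reliabilities $\bar p_k$ (using the equivalence $p_i\geq p_j\Leftrightarrow\bar p_i\leq\bar p_j$) delivers the conclusion.

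No step poses a real obstacle; the only point requiring care is choosing the correct bound on $I_\psi^{\rm c}(j)$. The universal Cauchy--Schwarz bound $I_\psi^{\rm c}(j)\leq\tfrac14$ from \thref{prop::CovarianceImportance}(i) is too crude to close the inequality, so the structure-dependent pivotal bound via~\eqref{eq::CovarianceImportanceRelationReliabfunct} is essential.
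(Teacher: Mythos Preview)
Your argument is sound for the special case you set up, but your decomposition $\phi(\boldsymbol x)=x_i\psi(\boldsymbol x')$ is narrower than the hypothesis of the theorem. Saying that component~$i$ is in series with a coherent module of the system does not force the entire system to be this series pair; the paper writes $\phi(\boldsymbol x)=\psi(x_1,\ldots,x_{i-1},x_i\chi(x_{i+1},\ldots,x_n))$ with~$\psi$ and~$\chi$ coherent, so there may be further components $1,\ldots,i-1$ sitting in an outer organizing structure~$\psi$. In that generality your identities break down: for instance $h_\phi(\boldsymbol p,0_i)=h_\psi(p_1,\ldots,p_{i-1},0)$ need not vanish, so $I_\phi^{\rm c}(i)\neq p_i\bar p_ih_\chi(\boldsymbol p')$ in general, and likewise $I_\phi^{\rm c}(j)$ is not simply $p_i\,I_\chi^{\rm c}(j)$.

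The missing step is a modular reduction, which the paper carries out via pivotal decomposition on the module variable: for every $k\in\{i,i+1,\ldots,n\}$ one obtains
\[
I_\phi^{\rm c}(k)=\mathbb E\bigl(\psi(X_1,\ldots,X_{i-1},1)-\psi(X_1,\ldots,X_{i-1},0)\bigr)\cdot\cov\bigl(X_k,\,X_i\chi(X_{i+1},\ldots,X_n)\bigr),
\]
and since the first factor is nonnegative (coherence of~$\psi$) and common to $k=i$ and $k=j$, the comparison of $I_\phi^{\rm c}(i)$ with $I_\phi^{\rm c}(j)$ reduces to comparing covariance importances inside the subsystem $x_i\chi$---precisely the situation you treat, with your~$\psi$ in the role of~$\chi$. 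After this reduction your pivotal bound $I_\chi^{\rm c}(j)=\bar p_j\bigl(h_\chi(\boldsymbol p')-h_\chi(\boldsymbol p',0_j)\bigr)\leq\bar p_jh_\chi(\boldsymbol p')\leq\bar p_ih_\chi(\boldsymbol p')$ is correct and is a reliability-function rephrasing of the paper's final inequality. The same gap and the same fix apply to part~(ii), whether you argue directly or by duality.
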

\begin{proof}
(i)\enspace By an appropriate numbering of the components we can assume without loss of generality that $\phi(\boldsymbol x)=\psi(x_1,\ldots,x_{i-1},x_i\chi(x_{i+1},\ldots,x_n))$, i.\,e.\ we assume $j>i$, and with~$\psi$ and~$\chi$ coherent structure functions. By pivotal decomposition we have
\begin{multline*}
\phi(\boldsymbol x)=x_i\chi(x_{i+1},\ldots,x_n)\psi(x_1,\ldots,x_{i-1},1)+{}\\
{}+(1-x_i\chi(x_{i+1},\ldots,x_n))\psi(x_1,\ldots,x_{i-1},0),
\end{multline*}
and we obtain for any $k\geq i$
\begin{multline}
\begin{aligned}
I_\phi^{\rm c}(k)&=\cov(X_k,X)\\
&=\mathbb E(\psi(X_1,\ldots,X_{i-1},1))\cov(X_k,X_i\chi(X_{i+1},\ldots,X_n))+{}\\
&\qquad{}+\mathbb E(\psi(X_1,\ldots,X_{i-1},0))\cov(X_k,1-X_i\chi(X_{i+1},\ldots,X_n))\\
&=\mathbb E(\psi(X_1,\ldots,X_{i-1},1)-\psi(X_1,\ldots,X_{i-1},0))\cdot{}\\
&\qquad{}\cdot\cov(X_k,X_i\chi(X_{i+1},\ldots,X_n)).
\end{aligned}
\label{eq::CovOrderThmEq1}
\end{multline}
Since~$\psi$ is coherent the first factor in the last equation is nonnegative, hence $I_\phi^{\rm c}(i)\geq I_\phi^{\rm c}(j)$ is equivalent to
\begin{multline*}
\cov(X_i,X_i\chi(X_{i+1},\ldots,X_n))=\\
\begin{aligned}
&=\mathbb E(X_i\chi(X_{i+1},\ldots,X_n))-p_i\mathbb E(X_i\chi(X_{i+1},\ldots,X_n))\\
&\geq\cov(X_j,X_i\chi(X_{i+1},\ldots,X_n))\\
&=\mathbb E(X_jX_i\chi(X_{i+1},\ldots,X_n))-p_j\mathbb E(X_i\chi(X_{i+1},\ldots,X_n)),
\end{aligned}
\end{multline*}
which is in turn equivalent to
\[p_i\mathbb E((1-X_j)\chi(X_{i+1},\ldots,X_n))\geq(p_i-p_j)\mathbb E(X_i\chi(X_{i+1},\ldots,X_n)).\]
Now since $p_i-p_j\leq0$ and the expectations are nonnegative we see that this inequality is satisfied. (ii)\enspace Again without loss of generality we can write $\phi(\boldsymbol x)=\psi(x_1,\ldots,x_{i-1},1-(1-x_i)(1-\chi(x_{i+1},\ldots,x_n)))$. In the same way as before we can show that for any $k\geq i$
\begin{multline*}
I_\phi^{\rm c}(k)=\cov(X_k,X)=\mathbb E(\psi(X_1,\ldots,X_{i-1},1)-\psi(X_1,\ldots,X_{i-1},0))\cdot{}\\
{}\cdot\cov(X_k,1-(1-X_i)(1-\chi(X_{i+1},\ldots,X_n))).
\end{multline*}
As in~(i) we see that  $I_\phi^{\rm c}(i)\geq I_\phi^{\rm c}(j)$ is equivalent to 
\[(p_j-p_i)\mathbb E((1-X_i)(1-\chi(X_{i+1},\ldots,X_n)))\leq\bar p_i\mathbb E(X_j(1-\chi(X_{i+1},\ldots,X_n))).\]
Since $p_j-p_i\leq0$ the last inequality holds and we conclude.
\end{proof}
\begin{remark}
The assertions of \thref{thm::CovarianceSeriesParallelOrder} also hold for the Birnbaum measure, as can be seen as follows. For the series case suppose that component~$i$ is in series with the rest of the system, and without loss of generality that $i=1$, i.\,e.\ $\phi(\boldsymbol x)=x_1\psi(x_2,\ldots,x_n)$, and $p_i\leq p_j$ for some $j=2,\ldots,n$. Then
\begin{align*}
I_\phi^{\rm B}(1)&=\mathbb E(\psi(X_2,\ldots,X_n))\\
&=p_j\mathbb E(\psi(X_2,\ldots,1,\ldots,X_n))+\bar p_j\mathbb E(\psi(X_2,\ldots,0,\ldots X_n))
\intertext{and}
I_\phi^{\rm B}(j)&=p_1\mathbb E\bigl(\psi(X_2,\ldots,1,\ldots,X_n)-\psi(X_2,\ldots,0,\ldots,X_n)\bigr),
\end{align*}
from which we conclude $I_\phi^{\rm B}(1)\geq I_\phi^{\rm B}(j)$, as desired. Now this result can be generalized to the case that component~$i$ is in series to a coherent module containing component~$j$ by using the following simple fact\kern1pt: Suppose that $\phi(\boldsymbol x)=\psi(\chi(x_1,\ldots,x_k),x_{k+1},\ldots,x_n)$ for two arbitrary coherent structure functions~$\psi$ and~$\chi$. Moreover, assume that $I_\chi^{\rm B}(i)\geq I_\chi^{\rm B}(j)$ for some $i,j=1,\ldots,k$. Then since
\[I_\phi^{\rm B}(\ell)=\frac{\partial h_\phi}{\partial p_\ell}(\boldsymbol p)=\frac{\partial h_\psi}{\partial h_\chi}\frac{\partial h_\chi}{\partial p_\ell}(\boldsymbol p)=\frac{\partial h_\psi}{\partial h_\chi}I_\chi^{\rm B}(\ell)\quad\text{for all}\ \ell=1,\ldots,k\]
it follows that $I_\phi^{\rm B}(i)\geq I_\phi^{\rm B}(j)$. The parallel case can be verified in a similar manner.
\end{remark}

\subsection{Mutual Information}
Another dependence measure that can be used in place of covariance in~\eqref{eq::CovarianceImp} is mutual information~\cite{Sh48,As65}, see also~\cite{ESS10,ESS15} for an overview on the use of information measures in statistics and reliability. The construction of importance measures from mutual information was discussed in~\cite{EJSS14}. We try to generalize some results from that work, and we correct an error.\par 
If~$U$ and~$V$ are two (finite valued) random variables, the \emph{mutual information} of~$U$ and~$V$ is defined as $I(U\mid V)=H(U)-H(U\mid V)$, where~$H(U)$ is the Shannon entropy of~$U$ and~$H(U\mid V)$ the conditional entropy of~$U$ given~$V$. Recall that if~$U$ and~$V$ take on the values~$u_1,\ldots,u_n$ and~$v_1,\ldots,v_m$ with probabilities $p(u_i)=\mathbb P\{U=u_i\}$, $p(v_i)=\mathbb P\{V=v_i\}$, and $p(u_i,v_j)=\mathbb P\{U=u_i,V=v_j\}$, then
\begin{align}
H(U)&=-\sum_{i=1}^np(u_i)\log(p(u_i)),\label{eq::EntropyDefinition}\\
H(V\mid U)&=-\sum_{i=1}^n\sum_{j=1}^mp(u_i,v_j)\log(p(v_j\mid u_i)),
\label{eq::ConditionalEntropyDefinition}
\end{align}
where $p(v_j\mid u_i)=\mathbb P\{V=v_j\mid U=u_i\}$ and with the convention $0\log(0)=0$. If the logarithms are taken with respect to the base~$2$, as we will assume in the following, then~$I(U\mid V)$ is the difference of the average number of bits (or answers to ``yes''-or-``no'' questions) required to determine the result of one observation of~$U$ before and after the result of an observation of~$V$ is revealed. The mutual information satisfies $I(U\mid V)=I(V\mid U)$, and $I(U\mid V)\geq0$ with equality if and only if~$U$ and~$V$ are independent\kern1pt; moreover, $I(U\mid V)=H(U)+H(V)-H(U,V)$, where~$H(U,V)$ is the Shannon entropy corresponding to the joint distribution of~$U$ and~$V$.\par
Returning to the framework of Section~\ref{sec::PreliminariesBinarySystem}, we define the \emph{information importance} of component~$i$ by
\begin{equation}
I_\phi^{\rm inf}(i)=I(X_i\mid X)=I(X\mid X_i),\quad i=1,\ldots,n.
\label{eq::InformationImportanceDef}
\end{equation}
Using~\eqref{eq::ConditionalReliabilityFunction}, we can, in analogy to~\eqref{eq::CovarianceImportanceReliabFunct}, express it in terms of the reliability function as
\begin{align}
I_\phi^{\rm inf}(i)&=\sum_{x,y\in\{0,1\}}\mathbb P\{X=x\mid X_i=y\}\mathbb P\{X_i=y\}\cdot{}\nonumber\\
&\qquad{}\cdot\log\Bigl(\frac{\mathbb P\{X=x\mid X_i=y\}}{\mathbb P\{X=x\}}\Bigr)\nonumber\\
&=\sum_{x,y\in\{0,1\}}p_i^y\bar p_i^{1-y}h_\phi(\boldsymbol p,y_i)^x\bar h_\phi(\boldsymbol p,y_i)^{1-x}\cdot{}\nonumber\\
&\qquad{}\cdot\log\Bigl(\frac{h_\phi(\boldsymbol p,y_i)^x\bar h_\phi(\boldsymbol p,y_i)^{1-x}}{h_\phi(\boldsymbol p)^x\bar h_\phi(\boldsymbol p)^{1-x}}\Bigr),
\label{eq::InformationImportanceRep}
\end{align}
with $\bar h_\phi=1-h_\phi$.
\begin{proposition}
The following assertions hold\kern1pt:
\begin{enumerate}
\item For an arbitrary~$\phi$, $0\leq I_\phi^{\rm inf}(i)\leq1$ for all $i=1,\ldots,n$.
\item If $I_\phi^{\rm inf}(i)=0$ then~$X_i$ and~$X$ are independent.
\item If~$\phi$ is coherent and $I_\phi^{\rm inf}(i)=1$ then $X_i=X$ and $\mathbb P\{X_i=1\}=\tfrac12$.
\item If $F_i(\boldsymbol p)=I_\phi^{\rm inf}(i)$ denotes~$I_\phi^{\rm inf}(i)$ as a function of~$\boldsymbol p$, then for the dual structure function~$\phi'$ we have $I_{\phi'}^{\rm inf}(i)=F_i(\boldsymbol1-\boldsymbol p)$.
\end{enumerate}
\label{lem::InformationImportanceProp}
\end{proposition}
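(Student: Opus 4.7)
The plan is to treat each of the four claims separately, leaning on standard information-theoretic identities together with the representation~\eqref{eq::InformationImportanceRep} and the association observations recorded in the remark following \thref{prop::CovarianceImportance}. Throughout I would use three facts: $I(U\mid V)=H(U)-H(U\mid V)$; $I(U\mid V)\geq0$ with equality iff $U,V$ are independent; and $H(U)\leq \log_2 k$ whenever $U$ takes at most $k$ values.

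For part~(i), nonnegativity is immediate from the second fact. For the upper bound, since $X_i\in\{0,1\}$ we have $H(X_i)\leq 1$, whence $I(X_i\mid X)=H(X_i)-H(X_i\mid X)\leq H(X_i)\leq 1$. Part~(ii) is the standard characterisation of independence via vanishing mutual information and requires no further work.

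The substance lies in part~(iii). Writing $I_\phi^{\rm inf}(i)=H(X_i)-H(X_i\mid X)=1$, the bounds $H(X_i)\leq 1$ and $H(X_i\mid X)\geq 0$ must both be attained. The first gives $p_i=\tfrac12$; the second forces $H(X_i\mid X)=0$, so $X_i=g(X)$ a.\,s.\ for some $g\colon\{0,1\}\to\{0,1\}$. Using the symmetry $I_\phi^{\rm inf}(i)=I(X\mid X_i)=1$ together with $H(X)\leq 1$ yields symmetrically $X=h(X_i)$ a.\,s., so $g$ is a bijection---hence either the identity or the flip $g(x)=1-x$. To rule out the flip I would invoke coherence: as noted in the remark after \thref{prop::CovarianceImportance}, $X$ and $X_i$ are associated, so $\cov(X_i,X)\geq 0$, whereas $X_i=1-X$ would give $\cov(X_i,X)=-\var(X)<0$ (note $\var(X)=\tfrac14$ because $X$ is binary and is determined by $X_i$ of variance $\tfrac14$). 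Hence $X_i=X$ a.\,s.

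For part~(iv), I would set $Y_j=1-X_j$, so that $Y_1,\ldots,Y_n$ are independent with $\mathbb P\{Y_j=1\}=\bar p_j$. By definition of the dual, $\phi'(\boldsymbol X)=1-\phi(\boldsymbol1-\boldsymbol X)=1-\phi(\boldsymbol Y)$ and $X_i=1-Y_i$. Because mutual information is invariant under bijective relabelling of either argument, $I(X_i;\phi'(\boldsymbol X))=I(1-Y_i;1-\phi(\boldsymbol Y))=I(Y_i;\phi(\boldsymbol Y))$, and the right-hand side is precisely $F_i(\boldsymbol1-\boldsymbol p)$ by the definition of $F_i$ applied to the reliability vector $\boldsymbol1-\boldsymbol p$. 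The only genuine obstacle in the whole proof is the flip-vs-identity dichotomy in~(iii); coherence, routed through the association remark, is exactly what disposes of it.
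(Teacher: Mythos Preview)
Your proof is correct and follows essentially the same route as the paper: parts~(i), (ii), and~(iv) are identical, and in~(iii) both arguments deduce $p_i=\tfrac12$ from $H(X_i)=1$, obtain $X_i=g(X)$ from $H(X_i\mid X)=0$, and then use coherence to exclude the flip map. Your detour through the symmetric identity $I(X\mid X_i)=1$ to establish that $g$ is a bijection is unnecessary---the constant maps are already excluded by $p_i=\tfrac12$---but harmless.
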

\begin{proof}
(i)\enspace From the definition of mutual information we have $I(X_i\mid X)\leq H(X_i)\leq1$, since~$X_i$ is a binary random variable. (ii)\enspace is clear. (iii)\enspace If $I_\phi^{\rm inf}(i)=H(X_i)-H(X_i\mid X)=1$ it follows that $H(X_i)=1$ and $H(X_i\mid X)=0$. The first equality implies $\mathbb P\{X_i=1\}=\tfrac12$ and the latter that $X_i=f(X)$ for some function $f:\{0,1\}\longrightarrow\{0,1\}$. Now since~$X_i$ and~$X$ are binary and~$\phi$ is coherent, the function~$f$ must be strictly monotonic, i.\,e.\ $X_i=X$. (iv)\enspace Since~$X$ and~$X_i$ are binary random variables, we have $H(X_i)=H(1-X_i)$ and $H(X_i\mid X)=H(1-X_i\mid 1-X)$ from~\eqref{eq::EntropyDefinition} and~\eqref{eq::ConditionalEntropyDefinition}. Thus is follows that
\begin{align*}
I_{\phi'}^{\rm inf}(i)&=I(X_i\mid1-\phi(\boldsymbol 1-\boldsymbol X))=H(X_i)-H(X_i\mid1-\phi(\boldsymbol1-\boldsymbol X))\\
&=H(1-X_i)-H(1-X_i\mid\phi(\boldsymbol1-\boldsymbol X))=I(1-X_i\mid\phi(\boldsymbol1-\boldsymbol X))\\
&=F_i(\boldsymbol1-\boldsymbol p),
\end{align*}
by using~\eqref{eq::EntropyDefinition} and~\eqref{eq::ConditionalEntropyDefinition} again.
\end{proof}
We note that a component has information importance~$H(X_i)$ if $X_i=X$, and information importance~$1$ if in addition $p_i=\tfrac12$.\par
\begin{example}
Consider a series system (i.\,e.\ $\phi(\boldsymbol x)=x_1\cdots x_n$). We evaluate~\eqref{eq::InformationImportanceRep} explicitly. In this case we have $h_\phi(\boldsymbol p,0_i)=0$ and $\bar h_\phi(\boldsymbol p,0_i)=1$. By an explicit calculation,
\begin{equation}
I_\phi^{\rm inf}(i)=\Bigl(p_i\coprod_{j\neq i}\bar p_j\Bigr)\log\Bigl(\coprod_{j\neq i}\bar p_j\Big/\coprod_{j=1}^n\bar p_j\Bigr)-\Bigl(\prod_{j=1}^np_j\Bigr)\log(p_i)-\bar p_i\log\Bigl(\coprod_{j=1}^n\bar p_j\Bigr),
\label{eq::InformationImportanceSeries}
\end{equation}
where $\coprod_jp_j=1-\prod_j(1-p_j)$. From \thref{lem::InformationImportanceProp} we see that the information importance of component~$i$ of a parallel system is obtained by replacing~$p_j$ by~$\bar p_j$ for $j=1,\ldots,n$ in~\eqref{eq::InformationImportanceSeries}.
\end{example}
\par
Next we look for a result analogous to \thref{thm::CovarianceSeriesParallelOrder}. However, as it turns out we could only obtain weaker results. We write $H(p)=-p\log(p)-\bar p\log(\bar p)$ for the entropy of a binary random variable~$U$ with $\mathbb P\{U=1\}=p$. Recall that the binary entropy function $p\mapsto H(p)$ is monotonically increasing on~$[0,\tfrac12]$, and decreasing on~$[\tfrac12,1]$.
\begin{theorem}
For an arbitrary structure function~$\phi$ the following three assertions hold true\kern1pt:
\begin{enumerate}
\item For $i,j\in\{1,\ldots,n\}$ we have $I_\phi^{\rm inf}(i)\geq I_\phi^{\rm inf}(j)$ if and only if
\begin{equation}
p_jH(h_\phi(\boldsymbol p,1_j))+\bar p_jH(h_\phi(\boldsymbol p,0_j))\geq p_iH(h_\phi(\boldsymbol p,1_i))+\bar p_iH(h_\phi(\boldsymbol p,0_i)).
\label{eq::InfImpComparison}
\end{equation}
\item Suppose that component~$i$ is in series with the rest of the system, that $p_i\leq p_j$ for some $j\neq i$, and that $\tfrac12\geq h_\phi(\boldsymbol p,1_j)\geq h_\phi(\boldsymbol p,1_i)$ or $\tfrac12\leq h_\phi(\boldsymbol p,1_j)\leq h_\phi(\boldsymbol p,1_i)$. Then $I_\phi^{\rm inf}(i)\geq I_\phi^{\rm inf}(j)$.
\item Suppose that component~$i$ is parallel to the rest of the system, that $p_i\geq p_j$ for some $j\neq i$, and that $\tfrac12\geq h_\phi(\boldsymbol p,0_j)\geq h_\phi(\boldsymbol p,0_i)$ or $\tfrac12\leq h_\phi(\boldsymbol p,0_j)\leq h_\phi(\boldsymbol p,0_i)$. Then $I_\phi^{\rm inf}(i)\geq I_\phi^{\rm inf}(j)$. 
\end{enumerate}
\label{thm::InformationImportanceOderdingGen}
\end{theorem}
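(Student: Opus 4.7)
The plan is to prove~(i) first, since~(ii) and~(iii) will fall out of it together with elementary properties of the binary entropy. For~(i), the key observation is that conditional on $X_i = y$ the variable $X$ is Bernoulli with parameter $h_\phi(\boldsymbol p, y_i)$ by~\eqref{eq::ConditionalReliabilityFunction}, so~\eqref{eq::ConditionalEntropyDefinition} gives
\[H(X \mid X_i) = p_i H(h_\phi(\boldsymbol p, 1_i)) + \bar p_i H(h_\phi(\boldsymbol p, 0_i)).\]
Writing $I_\phi^{\rm inf}(i) = I(X \mid X_i) = H(X) - H(X \mid X_i)$ and noting that $H(X)$ does not depend on~$i$, one sees that $I_\phi^{\rm inf}(i) \geq I_\phi^{\rm inf}(j)$ holds precisely when $H(X \mid X_j) \geq H(X \mid X_i)$, which is exactly~\eqref{eq::InfImpComparison}.

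For~(ii), after a suitable relabeling we may take $\phi(\boldsymbol x) = x_i\,\psi(\boldsymbol x^{(i)})$, where $\boldsymbol x^{(i)}$ collects the remaining coordinates. Then $h_\phi(\boldsymbol p, 0_i) = 0$, so $H(h_\phi(\boldsymbol p, 0_i)) = 0$ and the right-hand side of~\eqref{eq::InfImpComparison} collapses to $p_i H(h_\phi(\boldsymbol p, 1_i))$. Since $\bar p_j H(h_\phi(\boldsymbol p, 0_j)) \geq 0$ is nonnegative, it suffices to verify $p_i H(h_\phi(\boldsymbol p, 1_i)) \leq p_j H(h_\phi(\boldsymbol p, 1_j))$. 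The inequality $p_j \geq p_i$ is given; for the entropies, each case of the hypothesis places $h_\phi(\boldsymbol p, 1_i)$ and $h_\phi(\boldsymbol p, 1_j)$ on the same side of $\tfrac12$ with $h_\phi(\boldsymbol p, 1_j)$ closer to $\tfrac12$, whereupon the unimodality of the binary entropy function (increasing on $[0,\tfrac12]$, decreasing on $[\tfrac12,1]$) yields $H(h_\phi(\boldsymbol p, 1_j)) \geq H(h_\phi(\boldsymbol p, 1_i))$. Multiplying the two inequalities and applying~(i) closes the argument.

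Part~(iii) is dual. Relabeling so that $\phi(\boldsymbol x) = 1 - (1 - x_i)(1 - \psi(\boldsymbol x^{(i)}))$ gives $h_\phi(\boldsymbol p, 1_i) = 1$ and hence $H(h_\phi(\boldsymbol p, 1_i)) = 0$. Discarding the nonnegative summand $p_j H(h_\phi(\boldsymbol p, 1_j))$ now reduces~\eqref{eq::InfImpComparison} to $\bar p_i H(h_\phi(\boldsymbol p, 0_i)) \leq \bar p_j H(h_\phi(\boldsymbol p, 0_j))$, which follows from $\bar p_j \geq \bar p_i$ (equivalent to $p_i \geq p_j$) combined with the same monotonicity of~$H$, now applied at $h_\phi(\boldsymbol p, 0_\cdot)$.

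The main obstacle, which also motivates the unusual-looking hypotheses, is the realization that after the series or parallel structure has collapsed one of the two summands in $H(X\mid X_i)$, it is enough to bound the surviving summand from above by the corresponding summand in $H(X\mid X_j)$, simply discarding the other nonnegative summand. A naive attempt to compare $H(X\mid X_i)$ with $H(X\mid X_j)$ term by term using concavity of~$H$ yields the wrong direction; the asymmetric hypothesis (which only controls one of the two values $h_\phi(\boldsymbol p, y_j)$) works precisely because we only need to compare a single pair of terms.
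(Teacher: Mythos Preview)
Your proof is correct and follows essentially the same route as the paper: part~(i) via $I_\phi^{\rm inf}(k)=H(X)-H(X\mid X_k)$ together with the identity $H(X\mid X_k)=p_kH(h_\phi(\boldsymbol p,1_k))+\bar p_kH(h_\phi(\boldsymbol p,0_k))$, and parts~(ii) and~(iii) by observing that the series (resp.\ parallel) structure kills one summand on the $i$-side, after which the remaining inequality follows from $p_i\leq p_j$ (resp.\ $\bar p_i\leq\bar p_j$) combined with the monotonicity of the binary entropy on each side of~$\tfrac12$. Your explicit remark that the other summand on the $j$-side may simply be discarded because it is nonnegative is a helpful clarification that the paper leaves implicit.
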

\begin{proof}
(i)\enspace According to~\eqref{eq::InformationImportanceDef}, $I_\phi^{\rm inf}(i)\geq I_\phi^{\rm inf}(j)$ is equivalent to $H(X\mid X_j)\geq H(X\mid X_i)$, which in turn may be expressed as~\eqref{eq::InfImpComparison}, using~\eqref{eq::EntropyDefinition}, \eqref{eq::ConditionalEntropyDefinition} and~\eqref{eq::ConditionalReliabilityFunction}. (ii)\enspace If~$i$ is in series to the rest of the system, $h_\phi(\boldsymbol p,0_i)=0$, hence $H(h_\phi(\boldsymbol p,0_i))=0$. Moreover, the assumptions and the monotonicity properties of $p\mapsto H(p)$ imply $H(h_\phi(\boldsymbol p,1_j))\geq H(h_\phi(\boldsymbol p,1_i))$, thus $p_jH(h_\phi(\boldsymbol p,1_j))\geq p_iH(h_\phi(\boldsymbol p,1_i))$, and we see that~\eqref{eq::InfImpComparison} is satisfied. (iii)\enspace If~$i$ is parallel to the rest of the system, $h_\phi(\boldsymbol p,1_i)=1$, hence $H(h_\phi(\boldsymbol p,1_i))=0$. The rest follows as in~(ii) from the monotonicity properties of the binary entropy function and from~\eqref{eq::InfImpComparison}.
\end{proof}
For purely parallel or series systems assertions~(ii) and~(iii) of \thref{thm::InformationImportanceOderdingGen} can be considerably sharpened. To this end we need the following result.
\begin{lemma}
For any $\alpha\in[0,1]$ the function $p\mapsto H(\alpha p)/p$ is decreasing on~$\mathopen]0,1]$.
\label{lem::BinaryEntropyMonotonicity}
\end{lemma}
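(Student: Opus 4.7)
The plan is a one-variable calculus argument: set $f(p) = H(\alpha p)/p$ and show $f'(p) \le 0$ on $(0,1]$.

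First I would compute $f'(p)$ by the quotient rule, writing
\[
f'(p) = \frac{\alpha p\, H'(\alpha p) - H(\alpha p)}{p^2},
\]
so it suffices to show that the numerator $N(p) := \alpha p\, H'(\alpha p) - H(\alpha p)$ is nonpositive. Using $H(u) = -u\log u - (1-u)\log(1-u)$, a direct calculation gives $H'(u) = \log\bigl(\frac{1-u}{u}\bigr)$ (the base-$2$ vs.\ natural log distinction only contributes a positive constant factor, which is immaterial for the sign).

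Substituting $u = \alpha p$ into $N(p)$ and expanding, the two $\alpha p \log(\alpha p)$ terms cancel, leaving
\[
N(p) = \alpha p\log(1-\alpha p) + (1-\alpha p)\log(1-\alpha p) = \log(1-\alpha p).
\]
Since $\alpha p \in [0,1]$, we have $1-\alpha p \in [0,1]$ and hence $\log(1-\alpha p) \le 0$, with equality only when $\alpha p = 0$. Therefore $f'(p) \le 0$ on $(0,1]$, and $f$ is decreasing there, as claimed.

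There is no serious obstacle here; the whole argument rests on the pleasant algebraic telescoping that reduces $N(p)$ to a single logarithm. The only mild care required is noting that at $\alpha = 0$ the function is identically zero (so trivially nonincreasing), so one may assume $\alpha > 0$ when invoking $H'$ at interior points, and that base-$2$ logarithms do not affect the sign of the derivative.
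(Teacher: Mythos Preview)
Your proof is correct and follows essentially the same approach as the paper: compute the derivative of $H(\alpha p)/p$ via the quotient rule and reduce the sign condition to $\log(1-\alpha p)\le 0$. The paper phrases the intermediate step using the $\logit$ function and a substitution $q=\alpha p$, whereas you expand $N(p)$ directly and cancel; the algebra is the same, and your version is arguably a touch cleaner.
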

\begin{proof}
Recall the definition of the logit function $\logit(p)=\displaystyle\log\Bigl(\frac{p}{1-p}\Bigr)$, and that for the derivative of the binary entropy function $H'(p)=-\logit(p)$. Now the inequality
\[\frac{\de}{\de p}\frac{H(\alpha p)}{p}=-\frac{\alpha\logit(\alpha p)}{p}-\frac{H(\alpha p)}{p^2}\leq0\]
holds for any $\alpha,p\in\mathopen]0,1\mathclose[$ if and only if $-\logit(p)\leq H(p)/p$ for all $p\in\mathopen]0,1\mathclose[$. This latter inequality is easily rearranged to $\log(\bar p)\leq0$. Thus we conclude that the function in the statement of the lemma is decreasing.
\end{proof}
\begin{corollary}
Let $i,j=1,\ldots,n$ be two components, $i\neq j$. Then the following assertions hold\kern1pt:
\begin{enumerate}
\item For a series system and if $p_i\leq p_j$ it follows that $I_\phi^{\rm inf}(i)\geq I_\phi^{\rm inf}(j)$.
\item For a parallel system and if $p_i\geq p_j$ it follows that $I_\phi^{\rm inf}(i)\geq I_\phi^{\rm inf}(j)$.
\end{enumerate}
\label{cor::InformationImportanceSeriesPara}
\end{corollary}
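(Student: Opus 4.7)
The plan is to reduce both assertions to the comparison criterion~\eqref{eq::InfImpComparison} from \thref{thm::InformationImportanceOderdingGen}(i) and then to invoke \thref{lem::BinaryEntropyMonotonicity}. I will treat the series case directly and obtain the parallel case via the duality relation in \thref{lem::InformationImportanceProp}(iv).

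For~(i), I would begin by noting that for a series system $h_\phi(\boldsymbol p,0_k)=0$ for every~$k$, so $H(h_\phi(\boldsymbol p,0_k))=0$ and the criterion~\eqref{eq::InfImpComparison} collapses to
\[p_jH(h_\phi(\boldsymbol p,1_j))\geq p_iH(h_\phi(\boldsymbol p,1_i)).\]
Then I would introduce $\alpha=\prod_{\ell\neq i,j}p_\ell\in[0,1]$ so that $h_\phi(\boldsymbol p,1_j)=\alpha p_i$ and $h_\phi(\boldsymbol p,1_i)=\alpha p_j$, turning the desired inequality into
\[\frac{H(\alpha p_i)}{p_i}\geq\frac{H(\alpha p_j)}{p_j}.\]
Since $p_i\leq p_j$, this is immediate from the monotonicity statement of \thref{lem::BinaryEntropyMonotonicity} applied to the parameter~$\alpha$.

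For~(ii), the quickest route is to invoke duality. If~$\phi$ is a parallel system then its dual~$\phi'$ is a series system, and by \thref{lem::InformationImportanceProp}(iv),
\[I_\phi^{\rm inf}(k)(\boldsymbol p)=F_k(\boldsymbol p)=I_{\phi'}^{\rm inf}(k)(\boldsymbol1-\boldsymbol p)\]
for every component~$k$. The hypothesis $p_i\geq p_j$ becomes $\bar p_i\leq\bar p_j$ at the argument $\boldsymbol1-\boldsymbol p$, and part~(i) applied to the series structure~$\phi'$ at that argument gives $I_{\phi'}^{\rm inf}(i)(\boldsymbol1-\boldsymbol p)\geq I_{\phi'}^{\rm inf}(j)(\boldsymbol1-\boldsymbol p)$, which translates back to $I_\phi^{\rm inf}(i)\geq I_\phi^{\rm inf}(j)$. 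Alternatively, one can reproduce the argument of part~(i) directly: for a parallel system $h_\phi(\boldsymbol p,1_k)=1$, hence $H(h_\phi(\boldsymbol p,1_k))=0$, while $\bar h_\phi(\boldsymbol p,0_k)=\prod_{\ell\neq k}\bar p_\ell$, so using $H(p)=H(1-p)$ and setting $\beta=\prod_{\ell\neq i,j}\bar p_\ell$, the criterion~\eqref{eq::InfImpComparison} reduces to $H(\beta\bar p_i)/\bar p_i\geq H(\beta\bar p_j)/\bar p_j$, which again follows from \thref{lem::BinaryEntropyMonotonicity} since $\bar p_i\leq\bar p_j$.

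No part of the argument presents a real obstacle; the only slightly delicate point is the bookkeeping that associates the reliability $p_{k'}$ of the \emph{other} component (not the pivoted one) with the variable position inside the entropy $H(\alpha\,\cdot\,)$, so that the monotonicity of $p\mapsto H(\alpha p)/p$ can be applied with the inequality in the correct direction.
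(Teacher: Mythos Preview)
Your proposal is correct and essentially mirrors the paper's proof: for~(i) you reduce~\eqref{eq::InfImpComparison} via $h_\phi(\boldsymbol p,0_k)=0$, set $\alpha=\prod_{\ell\neq i,j}p_\ell$, and apply \thref{lem::BinaryEntropyMonotonicity} exactly as the paper does. For~(ii) your alternative direct argument is the paper's own proof; your primary duality route via \thref{lem::InformationImportanceProp}(iv) is a small shortcut the paper does not take, but it is sound and simply repackages the same computation.
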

\begin{proof}
(i)\enspace For the series system we have $h_\phi(\boldsymbol p)=p_1\cdots p_n$. Let us define the product $\displaystyle\alpha=\prod_{k\neq i,j}p_k\in[0,1]$. Then $h_\phi(\boldsymbol p,1_j)=\alpha p_i$ and $h_\phi(\boldsymbol p,1_i)=\alpha p_j$, and the assumption and \thref{lem::BinaryEntropyMonotonicity} yield
\[\frac{H(h_\phi(\boldsymbol p,1_j))}{p_i}=\frac{H(\alpha p_i)}{p_i}\geq\frac{H(\alpha p_j)}{p_j}=\frac{H(h_\phi(\boldsymbol p,1_i))}{p_j}.\]
Now since $h_\phi(\boldsymbol p,0_k)=0$ for all $k=1,\ldots,n$ it follows that inequality~\eqref{eq::InfImpComparison} is satisfied and we conclude by \thref{thm::InformationImportanceOderdingGen}. (ii)\enspace For the parallel system we have $h_\phi(\boldsymbol p)=1-\bar p_1\cdots\bar p_n$, and we define again a product $\displaystyle\alpha=\prod_{k\neq i,j}\bar p_k$. Then $\bar h_\phi(\boldsymbol p,0_j)=\alpha\bar p_i$ and $\bar h_\phi(\boldsymbol p,0_i)=\alpha\bar p_j$, and \thref{lem::BinaryEntropyMonotonicity} yields
\[\frac{H(\bar h_\phi(\boldsymbol p,0_j))}{\bar p_i}=\frac{H(\alpha\bar p_i)}{\bar p_i}\geq\frac{H(\alpha\bar p_j)}{\bar p_j}=\frac{H(\bar h_\phi(\boldsymbol p,1_i))}{\bar p_j},\]
where we used $H(p)=H(\bar p)$. Since $h_\phi(\boldsymbol p,1_k)=0$ for all $k=1,\ldots,n$ it again follows that inequality~\eqref{eq::InfImpComparison} is satisfied.
\end{proof}
It is worth noting that the results of \thref{cor::InformationImportanceSeriesPara} are included in~\cite{EJSS14}, however, the proof there contains some gaps. In fact, in the series case instead of $p_jH(h_\phi(\boldsymbol p,1_j))\geq p_iH(h_\phi(\boldsymbol p,1_i))$ as required by assertion~(i) of \thref{thm::InformationImportanceOderdingGen}, the method used in~\cite{EJSS14} only shows $p_jH(h_\phi(\boldsymbol p,1_j))\geq p_iH(h_\phi(\boldsymbol p,1_j))$\kern1pt; a similar gap exists for the parallel case.\par
Some numerical experiments with various reliability functions indicated that the conclusions of \thref{thm::CovarianceSeriesParallelOrder} also hold for information importance, i.\,e.,\ the additional assumptions in \thref{thm::InformationImportanceOderdingGen} are unnecessary. A proof appears to be difficult, so for now we can only formulate the following.
\begin{conjecture}
Let~$\phi$ be an arbitrary reliability function. Then the conclusions of \thref{thm::CovarianceSeriesParallelOrder} hold for~$I_\phi^{\rm inf}$.
\end{conjecture}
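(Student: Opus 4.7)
My plan is to reduce the conjecture to a single one-variable entropy inequality and attack it via concavity and a monotonicity argument. First, assertion~(ii) can be disposed of by duality\kern1pt: if component~$i$ is parallel to a coherent module containing~$j$ with $p_i\geq p_j$, then~$\phi'$ has~$i$ in series with the dual module at reliabilities $\bar p_i\leq\bar p_j$, and \thref{lem::InformationImportanceProp}~(iv) transports the series conclusion back to~$\phi$. So it suffices to prove assertion~(i).

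Adopting the modular decomposition $\phi(\boldsymbol x)=\psi(x_1,\ldots,x_{i-1},x_i\chi(x_{i+1},\ldots,x_n))$ from the proof of \thref{thm::CovarianceSeriesParallelOrder}, I exploit that $h_\psi(p_1,\ldots,p_{i-1},z)$ is affine in~$z$ while $h_\chi$ pivotally decomposes at~$p_j$. Setting $a=h_\psi(p_1,\ldots,p_{i-1},0)$, $\alpha=h_\psi(p_1,\ldots,p_{i-1},1)$, $X=(\alpha-a)h_\chi(\ldots,1_j,\ldots)$, $Y=(\alpha-a)h_\chi(\ldots,0_j,\ldots)$ with $X\geq Y\geq0$ and $a+X\leq1$, a direct verification yields
\begin{align*}
h_\phi(\boldsymbol p,1_j)&=a+p_iX,& h_\phi(\boldsymbol p,0_j)&=a+p_iY,\\
h_\phi(\boldsymbol p,1_i)&=a+p_jX+\bar p_jY,& h_\phi(\boldsymbol p,0_i)&=a.
\end{align*}
By \thref{thm::InformationImportanceOderdingGen}~(i) the desired inequality $I_\phi^{\rm inf}(i)\geq I_\phi^{\rm inf}(j)$ becomes $F(p_i)\geq0$, where
\[F(t)=p_jH(a+tX)+\bar p_jH(a+tY)-tH(a+p_jX+\bar p_jY)-(1-t)H(a).\]
A short computation gives $F''(t)=p_jX^2H''(a+tX)+\bar p_jY^2H''(a+tY)\leq0$ since $H''\leq0$, so~$F$ is concave on~$[0,1]$, and clearly $F(0)=0$. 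Concavity then yields $F(t)\geq(t/p_j)F(p_j)$ on~$[0,p_j]$, so the conjecture reduces to the single bound $F(p_j)\geq0$.

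This last step is where I expect the real work to lie. My approach is to substitute $u=a$, $v=a+Y$, $w=a+X$ so that $u\leq v\leq w$, rewrite $F(p_j)\geq0$ as
\[p_jH(p_jw+\bar p_ju)+\bar p_jH(p_jv+\bar p_ju)\geq p_jH(p_jw+\bar p_jv)+\bar p_jH(u),\]
and regard its left-minus-right-hand side as a function $G(v)$ on~$[u,w]$ with $u,w,p_j$ held fixed. One checks $G(u)=0$ and, using $H'=-\logit$,
\[G'(v)=p_j\bar p_j\bigl(H'(p_jv+\bar p_ju)-H'(p_jw+\bar p_jv)\bigr).\]
Since $v\leq w$ forces $p_jv+\bar p_ju\leq p_jw+\bar p_jv$ and $H'$ is monotonically decreasing, the parenthesis is nonnegative, so~$G$ is nondecreasing and hence $G(v)\geq G(u)=0$ on~$[u,w]$. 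This delivers $F(p_j)\geq0$, closes the series case, and by the initial duality step proves the conjecture.
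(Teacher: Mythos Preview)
The paper does not prove this statement\kern1pt; it is stated as an open conjecture, with the author remarking that ``a proof appears to be difficult.'' So there is nothing in the paper to compare against, and the relevant question is simply whether your argument is correct. It is.

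Your reduction is clean and the computations check out. The duality step is valid\kern1pt: if~$i$ is parallel to a coherent module in~$\phi$, then in~$\phi'$ it is in series with the dual (hence coherent) module, and \thref{lem::InformationImportanceProp}~(iv) transfers the ordering at reliabilities~$\boldsymbol1-\boldsymbol p$ back to~$\phi$ at~$\boldsymbol p$. The four values $h_\phi(\boldsymbol p,1_j),\,h_\phi(\boldsymbol p,0_j),\,h_\phi(\boldsymbol p,1_i),\,h_\phi(\boldsymbol p,0_i)$ are computed correctly from the modular decomposition, and the translation via \thref{thm::InformationImportanceOderdingGen}~(i) to $F(p_i)\geq0$ is right. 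Concavity of~$F$ with $F(0)=0$ indeed gives $F(t)\geq(t/p_j)F(p_j)$ on~$[0,p_j]$, so everything hinges on $F(p_j)\geq0$. Your substitution and the monotonicity argument for~$G$ are correct\kern1pt: one has
\[(p_jw+\bar p_jv)-(p_jv+\bar p_ju)=p_j(w-v)+\bar p_j(v-u)\geq0\]
because $u\leq v\leq w$ (you invoke only $v\leq w$, but both inequalities are needed and both hold), and then the strict concavity of~$H$ makes $H'$ decreasing, so $G'\geq0$ and $G(v)\geq G(u)=0$. All arguments of~$H$ lie in~$[0,1]$ since $0\leq a\leq a+Y\leq a+X\leq\alpha\leq1$.

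In short, you have settled the conjecture. It may be worth noting, as a sanity check consistent with your concavity picture, that Jensen's inequality gives $F(1)\leq0$, so~$F$ is nonnegative precisely on an initial interval containing~$p_j$\kern1pt; your inequality $F(p_j)\geq0$ is exactly what pins down that~$p_i\leq p_j$ lies in that interval.
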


\section{Continuous Time}
\label{sec::ContinuousTime}
\subsection{Preliminaries}
\label{sec::PreliminariesContinuousTime}
In addition to the notation introduced in Section~\ref{sec::PreliminariesBinarySystem} we will need the following concepts. Again we consider a system of~$n$ components, but this time with random lifetimes~$T_1,\ldots,T_n$, where the~$T_i$ are nonnegative random variables. In the sequel we assume that they are independent and of finite variance. The system is defined by a structure function~$\phi$ as in Section~\ref{sec::PreliminariesBinarySystem}, and the system life time is denoted by~$T$. If~$K_1,\ldots,K_k$ and~$P_1,\ldots,P_p$ are the minimal cut sets and minimal path sets of a coherent~$\phi$, then $\phi(\boldsymbol x)=\displaystyle\max_{j=1,\ldots,p}\min_{i\in P_j}x_i=\min_{j=1,\ldots,k}\max_{i\in K_j}x_i$. A subset $S\subseteq\{1,\ldots,n\}$ is called a formation if it is a union of minimal path sets, and a collection~$P_{i_1},\ldots,P_{i_\ell}$ of minimal path sets such that $S=P_{i_1}\cup\cdots\cup P_{i_\ell}$ is called a representation of~$F$. A representation is called odd if~$\ell$ is odd, and even if~$\ell$ is even. If a formation~$S$ has~$n_{\rm o}$ odd and~$n_{\rm e}$ even representations, then its signed domination is defined as $b_\phi(F)=n_{\rm o}-n_{\rm e}$. Recall that~$\phi$ can be written as
\begin{equation}
\phi(\boldsymbol x)=\sum_{S\subseteq\{1,\ldots,n\}}b_\phi(S)\prod_{i\in S}x_i,
\label{eq::StructureFunctMultilinear}
\end{equation}
cf.~\cite{BT12}. If~$\tau_\phi(\boldsymbol t)$ is the system life time as a function of the component life times $\boldsymbol t=(t_1,\ldots,t_n)$, we have
\begin{equation}
\tau_\phi(\boldsymbol t)=\max_{j=1,\ldots,p}\min_{i\in P_j}t_i=\min_{j=1,\ldots,k}\max_{i\in K_j}t_i,
\label{eq::SystemLifetimeCutPathSets}
\end{equation}
and hence
\begin{equation}
T=\tau_\phi(T_1,\ldots,T_n).
\end{equation}
We write $X_i(t)=\mathbbm1_{\{T_i>t\}}$, $i=1,\ldots,n$, and $X(t)=\mathbbm1_{\{T>t\}}$ for the now time-dependent binary component and system state variables. Moreover, for any binary variable~$X$ we define $\bar X=1-X$. If the random variable~$S$ is less than~$T$ in the usual stochastic order, i.\,e.\ if $\mathbb P\{S>t\}\leq\mathbb P\{T>t\}$ for all $t\in\mathbb R$, we write~$S\preceq T$. Finally, for $s,t\in\mathbb R$ we write~$s\land t$ for~$\min\{s,t\}$, and~$s\lor t$ for~$\max\{s,t\}$.\par
For any component $i=1,\ldots,n$ we denote the joint distribution of~$T_i$ and~$T$ by
\[H_i(s,t)=\mathbb P\{T_i\leq s,T\leq t\},\quad s,t\in\mathbb R,\]
and we write~$F_i$ and~$F$ for the cumulative distribution functions of~$T_i$ and~$T$, respectively. Clearly, the joint distribution~$H_i$~contains all information about the dependence between~$T_i$ and~$T$, so an importance measure based on the dependence between these two random variables should be constructed from~$H_i$ alone.

\subsection{Covariance}
In this section we will parallel the developments of Section~\ref{sec::ImportanceCovarianceBinary} and define an importance measure based on covariance. In fact, in the continuous time setting there are several possible ways to introduce importance measures based on covariance. We start by defining the \emph{$\mathrm L^1$-covariance importance} of component~$i$ in the continuous time case by
\begin{equation}
I_\phi^{\rm c1}(i)=\cov(T_i,T)=\cov(T_i,\tau_\phi(T_1,\ldots,T_n)),\quad i=1,\ldots,n.
\label{eq::CovarianceImportanceCont}
\end{equation}
Whereas the covariance importance~\eqref{eq::CovarianceImp} may be regarded as an importance at a fixed time in the present setting, i.\,e.\ $I_\phi^{\rm c}(i,t)=\cov(X_i(t),X(t))$ for some fixed~$t\geq0$, the importance measure~\eqref{eq::CovarianceImportanceCont} depends on the distributions of the component life times~$T_1,\ldots,T_n$, or more precisely, on~$H_i$.\par
We start by establishing a representation of~\eqref{eq::CovarianceImportanceCont} in terms of binary variables, which also shows some connection to the binary case expounded in Section~\ref{sec::ImportanceCovarianceBinary}. We have
\begin{multline}
H_i(s,t)-F_i(s)F(t)\\
\begin{aligned}
&=\mathbb P\{\bar X_i(s)=1,\bar X(t)=1\}-\mathbb P\{\bar X_i(s)=1\}\mathbb P\{\bar X(t)=1\}\\
&=\mathbb E(\bar X_i(s)\bar X(t))-\mathbb E(\bar X_i(s))\mathbb E(\bar X(t))\\
&=\cov(\bar X_i(s),\bar X(t))\\
&=\cov(X_i(s),X(t))
\end{aligned}
\end{multline}
for any $s,t\geq0$. Now by Hoeffding's covariance lemma (see~\cite{L66} for a proof) we can write
\begin{equation}
\begin{aligned}
\cov(T_i,T)&=\int_0^\infty\int_0^\infty\bigl(H_i(s,t)-F_i(s)F(t)\bigr)\,\de s\de t\\
&=\int_0^\infty\int_0^\infty\cov(X_i(s),X(t))\,\de s\de t\\
&=\lVert H_i-F_iF\rVert_1.
\end{aligned}
\label{eq::L1CovarianceImportanbceHoeffding}
\end{equation}	
This shows that~$I_\phi^{\rm c1}(i)$ is actually the $\mathrm L^1$-distance between~$H_i$ and~$F_iF$. Notice that $\cov(X_i(s),X(t))\geq0$ for all $s,t\geq0$ since~$X_i(s)$ and~$X(t)$ are nondecreasing functions of~$T_1,\ldots,T_n$ in view of~\eqref{eq::SystemLifetimeCutPathSets}. The integrand~$\cov(X_i(s),X(t))$ in the last integral is, for $s=t$, equal to the correlation importance of component~$i$ for the system defined by~$\phi$ and the binary variables~$X_1(t),\ldots,X_n(t)$ for the component states.\par
In analogy to item~(i) of \thref{prop::CovarianceImportance} we have a corresponding but slightly weaker result for~$I_\phi^{\rm c1}$.
\begin{proposition}
If~$\phi$ is coherent, $0\leq I_\phi^{\rm c1}(i)\leq\sqrt{\var(T_i)\var(T)}$ for all $i=1,\ldots,n$. Moreover, if $I_\phi^{\rm c1}(i)=0$ then~$T_i$ and~$T$ are independent, and if $I_\phi^{\rm c1}(i)$ is maximal, i.\,e.\ $I_\phi^{\rm c1}(i)=\sqrt{\var(T_t)\var(T)}$, then~$T_i$ and~$T$ are~a.\,s linearly dependent.
\label{prop::L1CovarianceImportanceProperties}
\end{proposition}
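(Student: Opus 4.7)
The plan is to handle the three claims in sequence, with the bounds and the zero case both resting on the integral representation \eqref{eq::L1CovarianceImportanbceHoeffding} and the maximality case being the equality case of Cauchy--Schwarz.

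For the two-sided bound, I would first observe that because $\phi$ is coherent, $\tau_\phi$ is coordinatewise nondecreasing by \eqref{eq::SystemLifetimeCutPathSets}, so both $T=\tau_\phi(T_1,\ldots,T_n)$ and $T_i$ are nondecreasing functions of the independent lifetimes $T_1,\ldots,T_n$. Hence for every $s,t\geq0$ the indicators $X_i(s)$ and $X(t)$ are associated, giving $\cov(X_i(s),X(t))=H_i(s,t)-F_i(s)F(t)\geq0$; integrating via \eqref{eq::L1CovarianceImportanbceHoeffding} yields $I_\phi^{\rm c1}(i)\geq0$. The upper bound $I_\phi^{\rm c1}(i)\leq\sqrt{\var(T_i)\var(T)}$ is then immediate from the Cauchy--Schwarz inequality applied to the centered variables $T_i-\mathbb E(T_i)$ and $T-\mathbb E(T)$.

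If $I_\phi^{\rm c1}(i)=0$, then since the integrand in \eqref{eq::L1CovarianceImportanbceHoeffding} is nonnegative, the equality $H_i(s,t)=F_i(s)F(t)$ must hold for Lebesgue-almost every $(s,t)\in[0,\infty)^2$. Both sides are right-continuous in each coordinate separately, so the equality in fact holds for every $s,t\geq0$; and for $s<0$ or $t<0$ it holds trivially since $T_i,T\geq0$. The resulting factorization of the joint distribution function on all of $\mathbb R^2$ is precisely the independence of $T_i$ and $T$.

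For the maximality case $I_\phi^{\rm c1}(i)=\sqrt{\var(T_i)\var(T)}$, squaring recovers the equality case of Cauchy--Schwarz in $\mathrm L^2$ applied to the centered variables, which forces them to be linearly dependent almost surely, i.\,e.\ $T=aT_i+b$ a.\,s.\ for some $a,b\in\mathbb R$ (with $a=\cov(T_i,T)/\var(T_i)\geq0$ whenever $\var(T_i)>0$, and the degenerate case $\var(T_i)=0$ giving an a.\,s.\ constant and hence a trivial affine relation). The only mildly delicate point in the whole argument is the passage from ``almost everywhere'' to ``everywhere'' in the zero case, which is settled cleanly by right-continuity of distribution functions; everything else is a direct invocation of association, Hoeffding's identity, and Cauchy--Schwarz.
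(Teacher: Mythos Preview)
Your argument is correct and follows essentially the same route as the paper: association to get nonnegativity, Cauchy--Schwarz for the upper bound and the equality case, and the standard ``associated plus zero covariance implies independence'' for the vanishing case. The only stylistic difference is that the paper invokes association of~$T_i$ and~$T$ directly and cites the independence conclusion as a known property of associated variables, whereas you unpack that step explicitly via the Hoeffding integrand and right-continuity; the content is the same.
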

\begin{proof}
Since~$T_1,\ldots,T_n$ are independent, they are associated, and moreover the projection~$\pr_i$ as well as~$\tau_\phi$ is a nondecreasing function. Hence the variables $T=\tau_\phi(T_1,\ldots,T_n)$ and~$T_i=\pr_i(T_1,\ldots,T_n)$ are associated as well, thus $\cov(T_i,T)\geq0$. The second inequality is the Cauchy--Schwarz inequality. Moreover, $I_\phi^{\rm c1}(i)=\cov(T_i,T)=0$ implies independence of~$T_i$ and~$T$ since they are associated. Finally, if~$I_\phi^{\rm c1}(i)$ is maximal, there is equality in the Cauchy--Schwarz inequality and thus~$T_i$ and~$T$ are~a.\,s.\ linearly dependent.
\end{proof}
\par
Instead of using the $\mathrm L^1$-distance between~$H_i$ and~$F_iF$, we can replace it by the $\mathrm L^\infty$-distance to define another importance measure based on covariance, which we call the \emph{$\mathrm L^\infty$-covariance importance\kern1pt:}
\begin{equation}
\begin{aligned}
I_\phi^{\rm c\infty}(i)&=\lVert H_i-F_iF\rVert_\infty=\sup_{s,t\geq0}\lvert H_i(s,t)-F_i(s)F(t)\rvert\\
&=\sup_{s,t\geq0}\lvert\cov(X_i(s),X(t))\rvert.
\end{aligned}
\label{eq::LinftyCovarianceImportanceDefinition}
\end{equation}
We remark that a dependence measure based on the $\mathrm L^\infty$-distance has been discussed in~\cite{SW81}. It is easy to see that we have the following version of \thref{prop::L1CovarianceImportanceProperties}.
\begin{proposition}
If~$\phi$ is coherent, $0\leq I_\phi^{\rm c\infty}(i)\leq\tfrac14$ for all $i=1,\ldots,n$. Moreover, if $I_\phi^{\rm c\infty}(i)=0$ then~$T_i$ and~$T$ are independent.
\end{proposition}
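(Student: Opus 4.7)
The plan is to mimic, step by step, the three-part argument used in \thref{prop::CovarianceImportance}(i) and \thref{prop::L1CovarianceImportanceProperties}, exploiting the fact that for each fixed pair $(s,t)$ the quantity inside the supremum in~\eqref{eq::LinftyCovarianceImportanceDefinition} is a covariance of two Bernoulli variables, and that the identity $H_i(s,t)-F_i(s)F(t)=\cov(X_i(s),X(t))$ already established prior to~\eqref{eq::L1CovarianceImportanbceHoeffding} is valid for every $s,t\geq 0$.

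First, for the lower bound, I would observe that the independence of $T_1,\ldots,T_n$ implies their association, and coherence of~$\phi$ together with the max-min representation~\eqref{eq::SystemLifetimeCutPathSets} makes $\tau_\phi$ (and hence $X(t)=\mathbbm1_{\{\tau_\phi(T_1,\ldots,T_n)>t\}}$) nondecreasing in each argument; the projection $T_i\mapsto X_i(s)=\mathbbm1_{\{T_i>s\}}$ is likewise nondecreasing. Thus for every fixed $s,t\geq0$ the binary pair $(X_i(s),X(t))$ is associated, so $\cov(X_i(s),X(t))\geq 0$. The absolute value in~\eqref{eq::LinftyCovarianceImportanceDefinition} is therefore superfluous, and $I_\phi^{\rm c\infty}(i)\geq 0$ is immediate.

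For the upper bound I would apply the Cauchy--Schwarz inequality pointwise: since $X_i(s)$ and $X(t)$ are Bernoulli, their variances are of the form $p(1-p)\leq\tfrac14$, so
\[\lvert\cov(X_i(s),X(t))\rvert\leq\sqrt{\var(X_i(s))\var(X(t))}\leq\sqrt{\tfrac14\cdot\tfrac14}=\tfrac14\]
uniformly in $s,t\geq0$, and taking the supremum yields $I_\phi^{\rm c\infty}(i)\leq\tfrac14$.

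Finally, for the independence statement, I would use the identity $H_i(s,t)-F_i(s)F(t)=\cov(X_i(s),X(t))$ displayed just before~\eqref{eq::L1CovarianceImportanbceHoeffding}. If $I_\phi^{\rm c\infty}(i)=\sup_{s,t\geq0}\lvert\cov(X_i(s),X(t))\rvert=0$, then $H_i(s,t)=F_i(s)F(t)$ for all $s,t\geq0$, i.\,e.\ the joint distribution function of the nonnegative pair $(T_i,T)$ factors into its marginals, which characterizes independence. I do not foresee any serious obstacle: each of the three ingredients (association for sign, Cauchy--Schwarz for magnitude, and the pointwise Hoeffding-type identity for the independence conclusion) has already been used earlier in the paper, and the only minor twist relative to \thref{prop::CovarianceImportance}(i) is that the Cauchy--Schwarz estimate must be applied to each $(s,t)$ separately before taking the supremum.
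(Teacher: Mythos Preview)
Your proposal is correct and essentially coincides with the paper's own argument, which is the single sentence ``follows trivially by using \thref{prop::CovarianceImportance} and~\eqref{eq::LinftyCovarianceImportanceDefinition}''. You have simply unpacked what that sentence means: apply the Bernoulli--Cauchy--Schwarz bound from the proof of \thref{prop::CovarianceImportance}(i) pointwise in $(s,t)$, invoke the nonnegativity of $\cov(X_i(s),X(t))$ already noted after~\eqref{eq::L1CovarianceImportanbceHoeffding}, and read off independence from the factorisation $H_i=F_iF$.
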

\begin{proof}
The proof follows trivially by using \thref{prop::CovarianceImportance} and~\eqref{eq::LinftyCovarianceImportanceDefinition}.
\end{proof}
We proceed by establishing a lemma which helps calculating the $\mathrm L^\infty$-covariance importance explicitly.
\begin{lemma}
Suppose that~$\phi$ is coherent. Then the $\mathrm L^\infty$-covariance importance can be written as
\begin{equation}
I_\phi^{\rm c\infty}(i)=\sup_{t\geq0}\bigl\{\cov(X_i(t),X(t))\bigr\}
\label{eq::LinftyCovariaceImpSup}
\end{equation}
for any $i=1,\ldots,n$.
\label{lem::LinftyCovarianceRep}
\end{lemma}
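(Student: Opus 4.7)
The plan is to reduce the two-parameter supremum in~\eqref{eq::LinftyCovarianceImportanceDefinition} to a one-parameter one by showing that, for each fixed $t \geq 0$, the map $s \mapsto \cov(X_i(s), X(t))$ is maximized at $s = t$. As a preliminary observation, since $X_i(s)$ and $X(t)$ are nondecreasing functions of $-T_1,\ldots,-T_n$ and the $T_j$ are independent, the two variables are associated, hence $\cov(X_i(s), X(t)) \geq 0$ and the absolute value in~\eqref{eq::LinftyCovarianceImportanceDefinition} may be dropped.

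The key manipulation is a pivotal decomposition of $\phi(\boldsymbol X(t))$ on component $i$,
\[\phi(\boldsymbol X(t)) = X_i(t)\phi(\boldsymbol X(t), 1_i) + (1 - X_i(t))\phi(\boldsymbol X(t), 0_i).\]
Because $\phi(\boldsymbol X(t), 1_i)$ and $\phi(\boldsymbol X(t), 0_i)$ depend only on $(X_j(t))_{j \neq i}$, and hence are independent of the $T_i$-measurable variable $X_i(s)$, taking covariance with $X_i(s)$ on both sides collapses to
\[\cov(X_i(s), X(t)) = \bigl(h_\phi(\boldsymbol p(t), 1_i) - h_\phi(\boldsymbol p(t), 0_i)\bigr)\,\cov(X_i(s), X_i(t)),\]
where $p_j(t) = \mathbb P\{T_j > t\}$ and $\boldsymbol p(t) = (p_1(t), \ldots, p_n(t))$. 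By coherence the first factor is nonnegative, and crucially it does not depend on $s$.

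It then remains to maximize $\cov(X_i(s), X_i(t)) = \mathbb P\{T_i > s \lor t\} - p_i(s)p_i(t)$ over $s \geq 0$. Splitting on $s \leq t$ and $s \geq t$, this expression equals $p_i(t)\bar p_i(s)$ and $p_i(s)\bar p_i(t)$ respectively. Since $u \mapsto p_i(u)$ is nonincreasing, each branch peaks at $s = t$ with common value $p_i(t)\bar p_i(t) = \var(X_i(t))$, so the supremum in $s$ is attained at $s = t$. Taking the supremum over $t \geq 0$ then delivers~\eqref{eq::LinftyCovariaceImpSup}. The only nontrivial point is recognizing that the pivotal decomposition together with the independence of $X_i(s)$ from $(X_j(t))_{j \neq i}$ factors the $s$- and $t$-dependence cleanly; once that is in place, the rest reduces to a transparent analysis of the covariance of a single random variable evaluated at two thresholds.
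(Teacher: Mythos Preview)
Your argument is correct and follows the same strategy as the paper: factor $\cov(X_i(s),X(t))$ as a nonnegative $s$-independent coefficient times $\cov(X_i(s),X_i(t))=\bar F_i(s\lor t)-\bar F_i(s)\bar F_i(t)$, then check that this last expression is maximized in~$s$ at $s=t$. The paper obtains the factorization by forward-referencing the signed domination expansion~\eqref{eq::LinftyCovarianceImportanceCalculation}, whereas your pivotal decomposition is more direct and self-contained; one small slip is that $X_i(s)$ and $X(t)$ are nondecreasing in $T_1,\ldots,T_n$ (not in $-T_1,\ldots,-T_n$), though the association conclusion is unaffected.
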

\begin{proof}
Fix $t\geq0$ and $i=1,\ldots,n$. From~\eqref{eq::LinftyCovarianceImportanceCalculation} below we have
\[\cov(X_i(s),X(t))=K(t)\bigl(\bar F_i(s\lor t)-\bar F_i(s)\bar F_i(t)\bigr),\quad s\geq0,\]
with~$K(t)\geq0$ since $\cov(X_i(s),X(t))\geq0$ and $\bar F_i(s\lor t)\geq\bar F_i(s)\bar F_i(t)$. Now
\[[0,\infty\mathclose[\ni s\mapsto\bar F_i(s\lor t)-\bar F_i(s)\bar F_i(t)=
\begin{cases}
F_i(s)\bar F_i(t)&:s\leq t\\
\bar F_i(s)F_i(t)&:s\geq t
\end{cases}\]
is maximal for~$s=t$. Thus we have established that $\displaystyle\sup_{s,t\geq0}\bigl(\cov(X_i(s),X(t))\bigr)\leq\sup_{t\geq0}\bigl(\cov(X_i(t),X(t))\bigr)$. Since the converse inequality holds trivially we have proved~\eqref{eq::LinftyCovariaceImpSup}.
\end{proof}
Using the representation~\eqref{eq::StructureFunctMultilinear} of the structure function we can write for any $i=1,\ldots,n$ and $s,t\geq0$
\begin{multline}
\cov(X_i(s),X(t))\\
\begin{aligned}
&=\sum_{S\subseteq\{1,\ldots,n\}}b_\phi(S)\cov\Bigl(X_i(s),\prod_{j\in S}X_j(t)\Bigr)\\
&=\sum_{S\subseteq\{1,\ldots,n\}\backslash\{i\}}b_\phi(S\cup\{i\})\mathbb E\Bigl(\prod_{j\in S}X_j(t)\Bigr)\cov(X_i(s),X_i(t))\\
&=\sum_{S\subseteq\{1,\ldots,n\}\backslash\{i\}}b_\phi(S\cup\{i\})\prod_{j\in S}\bar F_j(t)\bigl(\bar F_i(s\lor t)-\bar F_i(s)\bar F_i(t)\bigr),
\end{aligned}
\label{eq::LinftyCovarianceImportanceCalculation}
\end{multline}
so that by \thref{lem::LinftyCovarianceRep}
\begin{equation}
I_\phi^{\rm c\infty}(i)=\sup_{t\geq0}\left\{F_i(t)\sum_{S\subseteq\{1,\ldots,n\}\backslash\{i\}}b_\phi(S\cup\{i\})\prod_{j\in S\cup\{i\}}\bar F_j(t)\right\}.
\label{eq::LinftyCovarianceImportanceDistribFunct}
\end{equation}
This result lends itself to an explicit calculation of~$I_\phi^{\rm c\infty}$. A similar representation of~$I_\phi^{\rm c\infty}$ can be derived by using the cut set or path set representation~\eqref{eq::SystemLifetimeCutPathSets} of the reliability function instead of the signed domination form~\eqref{eq::StructureFunctMultilinear}.\par
We now consider dual structures. Let~$\phi'$ be the dual structure function of~$\phi$. As in \thref{prop::CovarianceImportance} we have
\[\cov(X_i(s),\phi'(X_1(t),\ldots,X_n(t)))=\cov(\bar X_i(s),\phi(\bar X_1(t),\ldots,\bar X_n(t))),\]
so by repeating the calculations in~\eqref{eq::LinftyCovarianceImportanceCalculation} we see that
\begin{equation}
\cov(\bar X_i(s),\bar X(t))=\sum_{S\subseteq\{1,\ldots,n\}\backslash\{i\}}b_\phi(S\cup\{i\})\prod_{j\in S\cup\{i\}}F_j(t)\bigl(F_i(s\land t)-F_i(s)F_i(t)\bigr),
\end{equation}
thus~$I_{\phi'}^{\rm c\infty}$ can be obtained from~\eqref{eq::LinftyCovarianceImportanceDistribFunct} by replacing~$F_i$ by~$\bar F_i$, $\bar F_j$ by~$F_j$, and retaining~$b_\phi$, i.\,e.\ 
\begin{equation}
I_{\phi'}^{\rm c\infty}(i)=\sup_{t\geq0}\left\{\bar F_i(t)\sum_{S\subseteq\{1,\ldots,n\}\backslash\{i\}}b_\phi(S\cup\{i\})\prod_{j\in S\cup\{i\}}F_j(t)\right\}.
\label{eq::LinftyCovarianceImportanceDistribFunctDual}
\end{equation}
Similarly, if we use~\eqref{eq::LinftyCovarianceImportanceCalculation} together with~\eqref{eq::L1CovarianceImportanbceHoeffding} to calculate~$I_\phi^{\rm c1}(i)$, then again a replacement of~$F_i$ by~$\bar F_i$, and~$\bar F_j$ by~$F_j$ yields~$I_{\phi'}^{\rm c1}(i)$.
\par
For the $\mathrm L^\infty$-covariance importance we can prove a result corresponding to \thref{thm::CovarianceSeriesParallelOrder}. As before this corroborates that~$I_\phi^{\rm c\infty}$ is a sensible importance measure in that it assigns large importances to unreliable series and reliable parallel components, and that it has some compatibility with a modular structure of the system.
\begin{theorem}
For a coherent structure function~$\phi$ the following two assertions hold true\kern1pt:
\begin{enumerate}
\item Suppose that component~$i$ is in series with a coherent module of the system, and that this module contains a component~$j$ such that that $T_i\preceq T_j$, where $j\neq i$. Then $I_\phi^{\rm c\infty}(i)\geq I_\phi^{\rm c\infty}(j)$.
\item Suppose that component~$i$ is parallel to a coherent module of the system, and that this module contains a component~$j$ such that $T_i\succeq T_j$, where $j\neq i$. Then $I_\phi^{\rm c\infty}(i)\geq I_\phi^{\rm c\infty}(j)$.
\end{enumerate}
\label{thm::CovarianceSeriesParallelOrderCont}
\end{theorem}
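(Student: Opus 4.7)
The natural approach is to reduce the continuous-time statement to the binary one (Theorem on covariance series/parallel ordering, cited above as \thref{thm::CovarianceSeriesParallelOrder}) by freezing time. The key enabler is \thref{lem::LinftyCovarianceRep}, which tells us that
\[I_\phi^{\rm c\infty}(i)=\sup_{t\geq0}\cov(X_i(t),X(t)),\]
so it suffices to control the integrand pointwise in $t$ and then pass to the supremum.

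First, I fix an arbitrary $t\geq0$ and observe that $X_1(t),\ldots,X_n(t)$ are independent binary random variables, with $\mathbb P\{X_k(t)=1\}=\bar F_k(t)$, driving the same coherent structure function $\phi$. Setting $p_k(t)=\bar F_k(t)$ and applying \thref{prop::CovarianceImportance}/\thref{thm::CovarianceSeriesParallelOrder} to this binary system, the covariance importance of component~$k$ in the frozen system at time~$t$ is exactly $\cov(X_k(t),X(t))$. Now the stochastic ordering hypothesis in~(i), namely $T_i\preceq T_j$, translates directly into $p_i(t)=\bar F_i(t)\leq\bar F_j(t)=p_j(t)$ for every $t\geq0$, while the modular structure hypothesis (component~$i$ in series with a coherent module containing~$j$) transfers verbatim to the binary system at time~$t$. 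Thus \thref{thm::CovarianceSeriesParallelOrder}(i) applied at time~$t$ yields
\[\cov(X_i(t),X(t))\geq\cov(X_j(t),X(t))\qquad\text{for every }t\geq0.\]
Taking the supremum over $t\geq0$ on both sides and invoking \thref{lem::LinftyCovarianceRep} gives $I_\phi^{\rm c\infty}(i)\geq I_\phi^{\rm c\infty}(j)$, which is~(i).

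Part~(ii) is handled in exactly the same way: the hypothesis $T_i\succeq T_j$ becomes $p_i(t)\geq p_j(t)$ for every $t\geq0$, the parallel-to-a-module structure is preserved in the binary system at each fixed~$t$, and \thref{thm::CovarianceSeriesParallelOrder}(ii) yields the pointwise inequality $\cov(X_i(t),X(t))\geq\cov(X_j(t),X(t))$; taking suprema concludes.

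The only subtle point—and the place where one has to be a little careful—is that the maximizing times for $i$ and $j$ in $\sup_{t\geq0}\cov(X_\bullet(t),X(t))$ need not coincide. This is, however, not actually an obstacle: a pointwise inequality $f(t)\geq g(t)$ for all $t$ immediately implies $\sup_t f\geq\sup_t g$, so no information about the argmax is needed. Everything else is a straightforward transcription of the binary proof, so I do not anticipate any hidden difficulty.
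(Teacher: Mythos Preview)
Your proposal is correct and follows essentially the same route as the paper's own proof: both reduce via \thref{lem::LinftyCovarianceRep} to the pointwise inequality $\cov(X_i(t),X(t))\geq\cov(X_j(t),X(t))$ for all $t\geq0$, translate $T_i\preceq T_j$ into $p_i(t)\leq p_j(t)$, and then invoke the binary result \thref{thm::CovarianceSeriesParallelOrder} at each fixed~$t$ before passing to the supremum. The only cosmetic difference is that the paper phrases the last step as ``repeat the algebra of the proof of \thref{thm::CovarianceSeriesParallelOrder} with $X_k$ replaced by $X_k(t)$'' rather than citing the theorem as a black box, and your explicit remark that a pointwise inequality suffices (no need for matching argmaxes) makes transparent a point the paper leaves implicit.
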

\begin{proof}
To verify~(i), we assume as in the proof of \thref{thm::CovarianceSeriesParallelOrder} and without loss of generality that the structure function of the system can be written as $\phi(\boldsymbol x)=\psi(x_1,\ldots,x_{i-1},x_i\chi(x_{i+1},\ldots,x_n))$, i.\,e.\ $j>i$, and with~$\psi$ and~$\chi$ coherent structure functions. In view of \thref{lem::LinftyCovarianceRep} it suffices to prove $\cov(X_i(t),X(t))\geq\cov(X_j(t),X(t))$ for all $t\geq0$. By using the assumption $T_i\preceq T_j$, which means $\mathbb E(X_i(t))\leq\mathbb E(X_j(t))$ for all $t\geq0$, this can be shown in exactly the same way as in the proof of \thref{thm::CovarianceSeriesParallelOrder} when~$X_i$ is replaced by~$X_i(t)$ and~$X$ by~$X(t)$, and~$p_i$, $p_j$ become~$\mathbb E(X_i(t))$, $\mathbb E(X_i(t))$, respectively. Similarly, by using \thref{lem::LinftyCovarianceRep}, it is seen that~(ii) can be checked in exactly the same way as in the proof of \thref{thm::CovarianceSeriesParallelOrder}.
\end{proof}
\begin{remark}
\label{rem::NatvigsMeasure}
It has been shown in~\cite{Nor86} that Natvig's importance measure~\cite{Nat79,Nat82, Nat85} can be written as $I_\phi^{\rm N}(i)=\cov(A_i(T_i),T)$, where~$\{A_i(t)\}_{t\geq0}$ is the compensator of the counting process $N_i(t)=\mathbbm1_{\{T_i\leq t\}}$. Thus~$I_\phi^{\rm N}$ it is closely related to~$I_\phi^{\rm c1}$. Moreover, a result similar to \thref{thm::CovarianceSeriesParallelOrderCont} for Natvig's measure has been demonstrated in~\cite{NG09}\kern1pt; this reference together with~\cite{Nat11} also consider the Natvig and Barlow--Proschan measures for repairable and for strongly coherent multistate systems.
\end{remark}
\par
The following example demonstrates that we cannot expect a result similar to \thref{thm::CovarianceSeriesParallelOrderCont} for~$I_\phi^{\rm c1}$.
\begin{example}
\label{ex::L1CovarianceImportanceTwoSeries}
Consider for the sake of simplicity two components in series. Then $\phi(x_1,x_2)=x_1x_2$ and for the signed domination we have $b(S)=1$ if $S=\{1,2\}$ and $b(S)=0$ for all other $S\subseteq\{0,1\}$. Thus from~\eqref{eq::LinftyCovarianceImportanceCalculation}
\begin{align*}
\cov(X_1(s),X(t))=(\bar F_1(s\lor t)-\bar F_1(s)\bar F_1(t))\bar F_2(t),\\
\cov(X_2(s),X(t))=(\bar F_2(s\lor t)-\bar F_2(s)\bar F_2(t))\bar F_1(t),
\end{align*}
and we suppose that the component life times are exponentially distributed, i.\,e.\ $\bar F_i(t)=\e^{-\lambda_it}$, $i=1,2$, and that $T_1\preceq T_2$, i.\,e.\ $\lambda_1\geq\lambda_2$. To calculate~$I_\phi^{\rm c1}$ we use the above identities together with~\eqref{eq::L1CovarianceImportanbceHoeffding} and split the area of integration in the double integral in two parts $\{(s,t):0\leq s\leq t\}$ and $\{(s,t):0\leq t\leq s\}$. This yields
\begin{align}
I_\phi^{\rm c1}(1)&=\int_0^\infty\int_0^t\cov(X_1(s),X(t))\,\de s\de t+\int_0^\infty\int_0^s\cov(X_1(s),X(t))\,\de t\de s\nonumber\\
&=\int_0^\infty\int_0^t(1-\e^{-\lambda_1s})\e^{-(\lambda_1+\lambda_2)t}\,\de s\de t+{}\nonumber\\
&\qquad{}+\int_0^\infty\int_0^s\e^{-\lambda_1s}(\e^{-\lambda_2t}-\e^{-(\lambda_1+\lambda_2)t})\,\de t\de s\nonumber\\
&=\frac{1}{(\lambda_1+\lambda_2)^2}-\frac{2}{\lambda_1(\lambda_1+\lambda_2)}-\frac{1}{\lambda_2(\lambda_1+\lambda_2)}+{}\nonumber\\
&\qquad{}+\frac{1}{\lambda_1(2\lambda_1+\lambda_2)}+\frac{1}{(\lambda_1+\lambda_2)(2\lambda_1+\lambda_2)}+\frac{1}{\lambda_1\lambda_2}.\label{eq::L1CovarianceImportanceExample}
\end{align}
To obtain~$I_\phi^{\rm c1}(2)$, $\lambda_1$~and~$\lambda_2$ have to be interchanged in the last equation. Now it can be seen after some algebra that we have $I_\phi^{\rm c1}(1)=I_\phi^{\rm c1}(2)$, that is, for two components in series with exponential life times the $\mathrm L^1$-covariance importance cannot distinguish between them. Moreover, this result can be generalized as can be seen from the structure of~\eqref{eq::LinftyCovarianceImportanceCalculation}\kern1pt: Consider a series system of~$n$ exponentially distributed components with failure rates~$\lambda_1,\ldots,\lambda_n$. Then~$I_\phi^{\rm c1}(1)$ is given by~\eqref{eq::L1CovarianceImportanceExample} when~$\lambda_2$ is replaced with $\lambda_2+\mu$, where $\mu=\lambda_3+\cdots+\lambda_n$, and~$I_\phi^{\rm c1}(2)$ when~$\lambda_1$ is replaced by~$\lambda_2$ and~$\lambda_2$ by~$\lambda_1+\mu$. Again some algebra shows that $I_\phi^{\rm c1}(1)=I_\phi^{\rm c1}(2)$, hence by symmetry for a series system with exponentially distributed component life times all components have the same $\mathrm L^1$-covariance importance. This clearly casts doubts on the utility of the $\mathrm L^1$-covariance importance as a good importance measure.\par
It is interesting to compare~$I_\phi^{\rm c1}$ with Natvig's importance measure~$I_\phi^{\rm N}$ that was mentioned in Remark~\ref{rem::NatvigsMeasure} in the context of this example. For the compensator of the counting process $N_i(t)=\mathbbm1_{\{T_i\leq t\}}$ we obtain $A_i(t)=\lambda_i(t\land T_i)$, thus $I_\phi^{\rm N}(i)=\cov(\lambda_iT_i,T)=\lambda_iI_\phi^{\rm c1}(i)$, thus Natvig's measure, in contrast to~$I_\phi^{\rm c1}$, distinguishes between the two components.
\end{example}
\begin{example}
\begin{figure}[t]
\centering
\includegraphics[width=1\columnwidth]{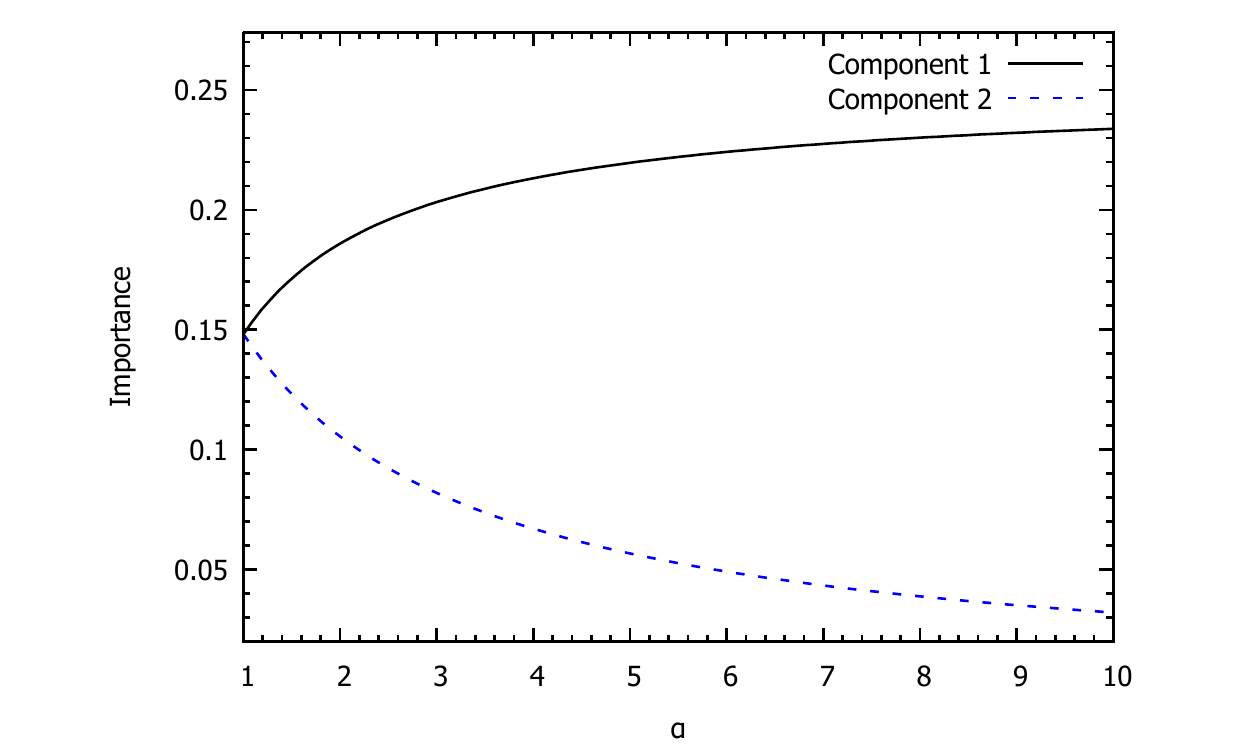}
\caption{$\mathrm L^\infty$-covariance importance for a two component series system with exponential component lifetimes, where $\lambda_1=\alpha\lambda_2$.}
\label{fig::LinftyExample}
\end{figure}
Consider again the system of two components in series with exponential life times from \thref{ex::L1CovarianceImportanceTwoSeries}. To calculate~$I_\phi^{\rm c\infty}(1)$ we use~\eqref{eq::LinftyCovarianceImportanceDistribFunct}, i.\,e.\ we need to find the maximum of $t\mapsto F_1(t)\bar F_1(t)\bar F_2(t)$ from the zero of its first derivative, which is at 
\[t_0=-\frac{1}{\lambda_1}\ln\Bigl(\frac{2\lambda_1+\lambda_2}{\lambda_1+\lambda_2}\Bigr),\]
hence we obtain
\begin{align*}
I_\phi^{\rm c\infty}(1)&=\Bigl(\frac{2\lambda_1+\lambda_2}{\lambda_1+\lambda_2}\Bigr)^{-\frac{\lambda_1+\lambda_2}{\lambda_1}}-\Bigl(\frac{2\lambda_1+\lambda_2}{\lambda_1+\lambda_2}\Bigr)^{-\frac{2\lambda_1+\lambda_2}{\lambda_1}},
\end{align*}
and for the other component by interchanging~$\lambda_1$ and~$\lambda_2$. If we put $\lambda_1=\alpha\lambda_2$ with $\alpha\geq1$ we get
\begin{align*}
I_\phi^{\rm c\infty}(1)&=\Bigl(\frac{2\alpha+1}{\alpha+1}\Bigr)^{-(\alpha+1)/\alpha}-\Bigl(\frac{2\alpha+1}{\alpha+1}\Bigr)^{-(2\alpha+1)/\alpha},\\
I_\phi^{\rm c\infty}(2)&=\Bigl(\frac{\alpha+2}{\alpha+1}\Bigr)^{-(\alpha+1)}-\Bigl(\frac{\alpha+2}{\alpha+1}\Bigr)^{-(\alpha+2)}.
\end{align*}
In Figure~\ref{fig::LinftyExample} the $\mathrm L^\infty$-covariance importances~$I_\phi^{\rm c\infty}(1)$ and~$I_\phi^{\rm c\infty}(2)$ of the two components are plotted as a function of~$\alpha$. Unlike the $\mathrm L^1$-covariance importance, the $\mathrm L^\infty$-covariance importance distinguishes between the two components and assigns a larger importance to the component whose lifetime distribution is smaller in the usual stochastic order, i.\,e.\ which has a larger failure rate.
\end{example}

\section{Concluding Remarks}
\label{sec::ConcludingRemarks}
We have studied component importance from the point of view of stochastic dependence, both in the binary (time-dependent) and continuous time case. In both cases we have used covariance as a dependence measure to construct importance measures, and we have been able to prove that the resulting measures order components in a natural way, i.\,e.\ assign large importance to unreliable series components and reliable parallel components. Moreover, in the binary case, similar to the developments in~\cite{EJSS14}, we have employed mutual information to obtain an importance measure, but we have only been able to obtain a weaker result on importance ordering for purely parallel and series systems\kern1pt; a generalization of this result remains a topic for further study. We have not addressed the use of mutual information to define a dependence measure in the continuous time case since this is associated with a number of additional technical complications. They are related to the joint distribution~$H_i$ of system life time~$T$ and component life time~$T_i$, $i=1,\ldots,n$, not being absolutely continuous for coherent systems (and not even absolutely continuous with respect to the product of the distributions of~$T_i$ and~$T$), hence mutual information is not defined in this case. A generalization of mutual information that covers this case has been introduced in~\cite{EJS14}, and some results concerning a corresponding importance measure are obtained in~\cite{EJSS14}. However, to prove an analog of \thref{thm::CovarianceSeriesParallelOrderCont} for that importance measure remains an issue for further study.\par
The present paper has shown that importance measures can be constructed from dependence measures. From a more fundamental perspective it seems to be worthwhile to further explore the connection between the notions of stochastic dependence and component importance, and establish links between them. This might prepare the ground for carrying over certain aspects from one framework to the other, such as ideas for axiomatization and classification of importance, and thereby facilitate a better understanding of this concept from a more abstract perspective. This, in turn, might lead to a better understanding of importance in reliability contexts.

\subsubsection*{Acknowledgments}
Thanks are due to the referees whose comments led to a significant improvement of the paper.

{\footnotesize

\end{document}